\newtheorem{definition}{Definition}
\newtheorem{lemma}{Lemma}
\newtheorem{theorem}{Theorem}
\newtheorem{rem}{Remark}
\let\oldproofname=\proofname
\renewcommand{\proofname}{\rm\bf{\oldproofname}}
\newcommand{\bN}{{\mathbb N}}
\newcommand{\bR}{{\mathbb R}}
\newcommand{\im}{\mathsf{im}}
\renewcommand{\ker}{\mathsf{ker}}
\newcommand{\cD}{\mathcal{D}}
\begin{document}

\title{Persistence Images: A Stable Vector Representation of Persistent Homology}

\author[1]{Henry Adams}
    
\author[2]{Sofya Chepushtanova}
	
\author[1]{Tegan Emerson}
	
\author[3]{Eric Hanson}
	
\author[1]{Michael Kirby}
	
\author[4]{Francis Motta} 
	
\author[1]{Rachel Neville} 
	
\author[1]{Chris Peterson} 
	
\author[1]{Patrick Shipman} 
	
\author[5]{Lori	Ziegelmeier} 

 \affil[1]{Department of Mathematics, Colorado State University}
 \affil[2]{Department of Mathematics and Computer Science, Wilkes University}
 \affil[3]{Department of Mathematics, Texas Christian University}
 \affil[4]{Department of Mathematics, Duke University}
 \affil[5]{Department of Mathematics, Statistics, and Computer Science, Macalester College}

\date{} 

\maketitle

\begin{abstract}
Many datasets can be viewed as a noisy sampling of an underlying space, and tools from topological data analysis can characterize this structure for the purpose of knowledge discovery. One such tool is persistent homology, which provides a multiscale description of the homological features within a dataset. A useful representation of this homological information is a \emph{persistence diagram} (PD). 
Efforts have been made to map PDs into spaces with additional structure valuable to machine learning tasks. 
We convert a PD to a finite-dimensional vector representation which we call a \emph{persistence image} (PI), and prove the stability of this transformation with respect to small perturbations in the inputs. The discriminatory power of PIs is compared against existing methods, showing significant performance gains.
We explore the use of PIs with vector-based machine learning tools, such as linear sparse support vector machines, which identify features containing discriminating topological information.  Finally, high accuracy inference of parameter values from the dynamic output of a discrete dynamical system (the \emph{linked twist map}) and a partial differential equation (the \emph{anisotropic Kuramoto-Sivashinsky equation}) provide a novel application of the discriminatory power of PIs.
\vskip1ex
\noindent \textbf{Keywords:} topological data analysis, persistent homology, persistence images, machine learning, dynamical systems
\end{abstract}


\section{Introduction}\label{sec:intro}
    
In recent years, the field of topology has grown to include a large set of computational tools \citep{Edelsbrunner10}. One of the fundamental tools is persistent homology, which tracks how topological features appear and disappear in a nested sequence of topological spaces \citep{edelsbrunner2008persistent,computingPH}. This multiscale information can be represented as a \emph{persistence diagram} (PD), a collection of points in the plane where each point $(x,y)$ corresponds to a topological feature that appears at scale $x$ and disappears at scale $y$.  We say the feature has a persistence value of $y-x$. This compact summary of topological characteristics by finite multi-sets of points in the plane is responsible, in part, for the surge of interest in applying persistent homology to the analysis of complex, often high-dimensional data. Computational topology has been successfully applied to a broad range of data-driven disciplines~\citep{windowsandpersistence, hippocampalPH, chung2009persistence, imagewebs, visionTDA, Swarms, ions}. 

Concurrent with this revolution in computational topology, a growing general interest in data analysis has driven advances in data mining, pattern recognition, and machine learning (ML). Since the space of PDs can be equipped with a metric structure (\emph{bottleneck} or \emph{Wasserstein} \citep{probabilityonPD, turner2014frechet}), and since these metrics reveal the stability of PDs under small perturbations of the data they summarize \citep{stabilityPD, cohen2010lipschitz, chazal2014persistence}, it is possible to perform a variety of ML techniques using PDs as a statistic for clustering data sets. However, many other useful ML tools and techniques (e.g.\ support vector machines (SVM), decision tree classification, neural networks, feature selection, and dimension reduction methods) require more than a metric structure. In addition, the cost of computing the bottleneck or Wasserstein distance grows quickly as the number of off-diagonal points in the diagrams increases \citep{di2015comparing}. To resolve these issues, considerable effort has been made to map PDs into spaces which are suitable for other ML tools \citep{bubenik2015statistical, reininghaus2015stable,bendich2014topologicalLiz,adcock2012ring,donatini1998size,ferri1997point,chung2009persistence,pachauri2011topology,bendich2014persistent,chen2015statistical,carriere2015stable,di2015comparing}. Each approach has benefits and drawbacks, which we review in \S\ref{sec:relatedwork}.  With these in mind, we are led to pose the following question:

\vspace{0.5em}
\noindent\textbf{Problem Statement:} How can we represent a persistence diagram so that:
\begin{enumerate}[label=(\roman*)] 
\item the output of the representation is a vector in $\bR^n$,
\item the representation is stable with respect to input noise,
\item the representation is efficient to compute,
\item the representation maintains an interpretable connection to the original PD, and
\item\label{item:relativeImportance} the representation allows one to adjust the relative importance of points in different regions of the PD?
\end{enumerate}


The main contribution of this paper is to provide a finite-dimensional-vector representation of a PD called a \emph{persistence image} (PI). We first map a persistence diagram $B$ to an integrable function $\rho_B\colon\bR^2 \rightarrow \bR$ called a \emph{persistence surface}. The surface $\rho_B$ is defined as a weighted sum of Gaussian functions\footnote{Or more generally, a weighted sum of probability density functions}, one centered at each point in the PD. The idea of persistence surfaces has appeared even prior to the development of persistent homology, in \citet{donatini1998size} and \cite{ferri1997point}. Taking a discretization of a subdomain of $\rho_B$ defines a grid.  A persistence image, i.e.\ a matrix of pixel values, can be created by computing the integral of $\rho_B$ on each grid box. This PI is a ``vectorization'' of the PD, and provides a solution to the problem statement above. 

Criteria (i) is our primary motivation for developing PIs.  A large suite of ML techniques and statistical tools (means and variances) already exist to work with data in $\mathbb{R}^n$.  Additionally, such a representation allows for the use of various distance metrics ($p$-norms and angle based metrics) and other measures of (dis)similarity.  The remaining criteria of the problem statement (ii-v) further ensure the usefulness of this representation.  

The desired flexibility of \ref{item:relativeImportance} is accomplished by allowing one to build a PI as a weighted sum of Gaussians, where the weightings may be chosen from a broad class of weighting functions.\footnote{Weighting functions are restricted only to the extent necessary for our stability results in \S\ref{sec:stability}.} For example, a typical interpretation is that points in a PD of high persistence are more important than points of low persistence (which may correspond to noise). One may therefore build a PI as a weighted sum of Gaussians where the weighting function is non-decreasing with respect to the persistence value of each PD point. However, there are situations in which one may prefer different measures of importance. Indeed, \citet{bendich2014persistent} find that, in their regression task of identifying a human brain's age from its arterial geometry, it is the points of medium persistence (not high persistence) that best distinguish the data. In such a setting, one may choose a weighting function with largest values for the points of medium persistence. In addition, the Homology Inference Theorem \citep{stabilityPD} states that when given a sufficiently dense finite sample from a space $X$, it is the points in the PD with sufficiently small birth times (and sufficiently high persistence) which recover the homology groups of the space; hence one may choose a weighting function that emphasizes points near the death-axis and away from the diagonal, as indicated in the leftmost yellow rectangle of \citet[Figure~2.4]{bendichThesis}. A potential disadvantage of the flexibility in \ref{item:relativeImportance} is that it requires a choice; however, prior knowledge of one's particular problem may inform that choice.  Moreoever, our examples illustrate the effectiveness of a standard choice of weighting function that is non-decreasing with the persistence value.
   
The remainder of this article is organized as follows. Related work connecting topological data analysis and ML is reviewed in \S\ref{sec:relatedwork}, and \S\ref{sec:background} gives a brief introduction to persistent homology, PDs from point cloud data, PDs from functions, and the bottleneck and Wasserstein metrics. PIs are defined in \S\ref{sec:persistenceimages} and their stability with respect to the 1-Wasserstein distance between PDs is proved in \S\ref{sec:stability}.  Lastly,  \S\ref{sec:experiments} contains examples of ML techniques applied to PIs generated from samples of common topological spaces, an applied dynamical system modeling turbulent mixing, and a partial differential equation describing pattern formation in extended systems driven far from equilibrium. Our code for producing PIs is publicly available at \url{https://github.com/CSU-TDA/PersistenceImages}.

\section{Related Work}\label{sec:relatedwork}
The space of PDs can be equipped with the bottleneck or Wasserstein metric (defined in \S\ref{sec:background}), and one reason for the popularity of PDs is that these metrics are stable with respect to small deviations in the inputs \citep{stabilityPD, cohen2010lipschitz, chazal2014persistence}. Furthermore, the bottleneck metric  allows one to define Fr{\'e}chet means and variances for a collection of PDs \citep{probabilityonPD, turner2014frechet}. However, the structure of a metric space alone is insufficient for many ML techniques, and a recent area of interest in the topological data analysis community has been encoding PDs in ways that broaden the applicability of persistence. For example, \citet{adcock2012ring} study a ring of algebraic functions on the space of persistence diagrams, and \citet{verovsek2016tropical} identifies tropical coordinates on the space of diagrams. \citet{ferri1999representing} and \citet{di2015comparing} encode a PD using the coefficients of a complex polynomial that has the points of the PD as its roots. 

\citet{bubenik2015statistical} develops the notion of a persistence landscape, a stable functional representation of a PD that lies in a Banach space. A persistence landscape (PL) is a function $\lambda\colon\bN\times\bR\to[-\infty,\infty]$, which can equivalently be thought of as a sequence of functions $\lambda_k\colon\bR\to[-\infty,\infty]$. For $1\le p\le\infty$ the $p$-landscape distance between two landscapes $\lambda$ and $\lambda'$ is defined as $\|\lambda-\lambda'\|_p$; the $\infty$-landscape distance is stable with respect to the bottleneck distance on PDs, and the $p$-landscape distance is continuous with respect to the $p$-Wasserstein distance on PDs. One of the motivations for defining persistence landscapes is that even though Fr{\'e}chet means of PDs are not necessarily unique \citep{probabilityonPD}, a set of persistence landscapes does have a unique mean.  Unique means are also a feature of PIs as they are vector representations. An advantage of PLs  over PIs is that the map from a PD to a PL is easily invertible; an advantage of PIs over PLs is that PIs live in Euclidean space and hence are amenable to a broader range of ML techniques. In \S\ref{sec:experiments}, we compare PDs, PLs, and PIs in a classification task on synthetic data sampled from common topological spaces. We find that PIs behave comparably or better than PDs when using ML techniques available to both representations, but PIs are significantly more efficient to compute. Also, PIs outperform PLs in the majority of the classification tasks and are of comparable computational efficiency.

A vector representation of a PD, due to \citet{carriere2015stable}, can be obtained by rearranging the entries of the distance matrix between points in a PD. In their Theorem~3.2, they prove that both the $L^\infty$ and $L^2$ norms between their resulting vectors are stable with respect to the bottleneck distance on PDs. They remark that while the $L^\infty$ norm is useful for nearest-neighbor classifiers, the $L^2$ norm allows for more elaborate algorithms such as SVM. However, though their stability result for the $L^\infty$ norm is well-behaved, their constant for the $L^2$ norm scales undesirably with the number of points in the PD. We provide this as motivation for our Theorem~\ref{thm:image-stable-Gaussians}, in which we prove the $L^\infty$, $L^1$, and $L^2$ norms for PI vectors are stable with respect to the 1-Wasserstein distance between PDs, and in which none of the constants depend on the number of points in the PD.

By superimposing a grid over a PD and counting the number of topological features in each bin, \citet{bendich2014topologicalLiz} create a feature vector representation. An advantage of this approach is that the output is easier to interpret than other more complicated representations, but a disadvantage is that the vectors are not stable for two reasons:
\begin{itemize}
\item[(i)] an arbitrarily small movement of a point in a PD may move it to another bin, and
\item[(ii)] a PD point emerging from the diagonal creates a discontinuous change.
\end{itemize}
Source (i) of instability can be improved by first smoothing a PD into a surface. This idea has appeared multiple times in various forms --- even prior to the development of persistent homology, \citet{donatini1998size} and \citet{ferri1997point} convert size functions (closely related to 0-dimensional PDs) into surfaces by taking a sum of Gaussians centered on each point in the diagram. This conversion is not stable due to (ii), and we view our work as a continued study of these surfaces, now also in higher homological dimensions, in which we introduce a weighting function\footnote{Our weighting function is continuous and zero for points of zero persistence, i.e.\ points along the diagonal.} to address (ii) and obtain stability. \citet{chung2009persistence} produce a surface by convolving a PD with the characteristic function of a disk, and \citet{pachauri2011topology} produce a surface by centering a Gaussian on each point, but both of these methods lack stability again due to (ii). Surfaces produced from random PDs are related to the empirical intensity plots of \citet{edelsbrunner2012current}.

\citet{reininghaus2015stable} produce a stable surface from a PD by taking the sum of a positive Gaussian centered on each PD point together with a negative Gaussian centered on its reflection below the diagonal; the resulting surface is zero along the diagonal. This approach is similar to ours, and indeed we use \citep[Theorem~3]{reininghaus2015stable} to show that our persistence surfaces are stable only with respect to the 1-Wasserstein distance (Remark~\ref{rem:only-1-stable}). Nevertheless, we propose our independently-developed surfaces as an alternative stable representation of PDs with the following potential advantages. First, our sum of non-negatively weighted Gaussians may be easier to interpret than a sum including negative Gaussians. Second, we produce vectors from our surfaces with well-behaved stability bounds, allowing one to use vector-based learning methods such as linear SVM. Indeed, \citet{zeppelzauer2016topological} report that while the kernel of \citet{reininghaus2015stable} can be used with nonlinear SVMs, in practice this becomes inefficient for a large number of training vectors because the entire kernel matrix must be computed. Third, while the surface of \citet{reininghaus2015stable} weights persistence points further from the diagonal more heavily, there are situations in which one may prefer different weightings, as discussed in \S\ref{sec:intro} and item  \ref{item:relativeImportance} of our Problem Statement. Hence, one may want weightings on PD points that are non-increasing or even decreasing when moving away from the diagonal, an option available in our approach.

We produce a persistence surface from a PD by taking a weighted sum of Gaussians centered at each point. We create vectors, or PIs, by integrating our surfaces over a grid, allowing ML techniques for finite-dimensional vector spaces to be applied to PDs. Our PIs are stable, and distinct homology dimensions may be concatenated together into a single vector to be analyzed simultaneously. Our surfaces are studied from the statistical point of view by \citet{chen2015statistical}, who cite a preprint version of our work; their applications in Section 4 use the $L^1$ norm between these surfaces, which can be justified as a reasonable notion of distance due to our Theorem~\ref{thm:surface-stable-Gaussians} that proves the $L^1$ distance between such surfaces is stable.

\citet{zeppelzauer2016topological} apply persistent images to 3D surface analysis for archeological data, in which the machine learning task is to distinguish scans of natural rock surfaces from those containing ancient human-made engravings. The authors state they select PIs over other topological methods because PIs are computationally efficient and can be used with a broader set of ML techniques. PIs are compared to an aggregate topological descriptor for a persistence diagram: the first entry of this vector is the number of points in the diagram, and the remaining entries are the minimum, maximum, mean, standard deviation, variance, 1st-quartile, median, 3rd-quartile, sum of square roots, sum, and sum of squares of all the persistence values. In their three experiments, the authors find the following.
\begin{itemize}
\item When classifying natural rock surfaces from engravings using persistent diagrams produced from the sublevel set filtration, PIs outperform the aggregate descriptor.
\item When the natural rock and engraved surfaces are first preprocessed using the completed local binary pattern (CLBP) operator for texture classificiation \citep{guo2010completed}, PIs outperform the aggregate descriptor.
\item The authors added PIs and the aggregate descriptor to eleven different non-topological baseline descriptors, and found that the classification accuracy of the baseline descriptor was improved more by the addition of PIs than by the addition of the aggregate descriptor.
\end{itemize}
Furthermore, Table~1 of \citet{zeppelzauer2016topological} demonstrates that for their machine learning task, PIs have low sensitivity to the parameter choices of resolution and variance (\S\ref{sec:persistenceimages}).

\section{Background on Persistent Homology}\label{sec:background}

Homology is an algebraic topological invariant that, roughly speaking, describes the holes in a space. The $k$-dimensional holes (connected components, loops, trapped volumes, etc.) of a topological space $X$ are encoded in an algebraic structure called the $k$-th homology group of $X$, denoted $H_k(X)$.  The rank of this group is referred to as the \emph{$k$-th Betti number}, $\beta_k$, and counts the number of independent $k$-dimensional holes. For a comprehensive study of homology, see \citet{hatcher2002algebraic}.

Given a nested sequence of topological spaces $X_1 \subseteq X_2 \subseteq \ldots \subseteq X_n$, the inclusion $X_i \subseteq X_{i'}$ for $i\le i'$ induces a linear map $H_k(X_i) \rightarrow H_k(X_{i'})$ on the corresponding $k$-th homology for all $k\ge0$. The idea of \emph{persistent homology} is to track elements of $H_k(X_i)$ as the scale (or ``time'') parameter $i$ increases \citep{edelsbrunner2008persistent,computingPH,Edelsbrunner10}. A standard way to represent persistent homology information is a \emph{persistence diagram} (PD)\footnote{Another standard representation is the barcode \citep{barcodes}.}, which is a multiset of points in the Cartesian plane $\bR^2$. For a fixed choice of homological dimension $k$, each homological feature is represented by a point $(x,y)$, whose \emph{birth} and \emph{death} indices $x$ and $y$ are the scale parameters at which that feature first appears and disappears, respectively. Since all topological features die after they are born, necessarily each point appears on or above the diagonal line $y=x$. A PD is a multiset of such points, as distinct topological features may have the same birth and death coordinates.\footnote{By convention, all points on the diagonal are taken with infinite multiplicity. This facilitates the definitions of the $p$-Wasserstein and bottleneck distances below.} Points near the diagonal are often considered to be noise while those further from the diagonal represent more robust topological features. 

In this paper, we produce PDs from two different types of input data:
\begin{enumerate}[label=(\roman*)]
\item When our data is a a point cloud, i.e.\ a finite set of points in some space, then we produce PDs using Vietoris--Rips filtration.
\item When our data is a real-valued function, then we produce PDs using the sublevel set filtration.\footnote{As explained in \S\ref{app:PD_Functions}, (i) can be viewed as a special case of (ii).}
\end{enumerate}
For setting (i), point cloud data often comes equipped with a metric or a measure of internal similarity and is rich with latent geometric content. One approach to identifying geometric shapes in data is to consider the dataset as the vertices of a simplicial complex and to add edges, triangles, tetrahedra, and higher-dimensional simplices whenever their diameter is less than a fixed choice of scale. This topological space is called the Vietoris--Rips simplicial complex, which we introduce in more detail in \S\ref{PD_Data}. The homology of the Vietoris--Rips complex depends crucially on the choice of scale, but persistent homology eliminates the need for this choice by computing homology over a range of scales \citep{carlsson2009topology,barcodes}. In \S\ref{sec:kmedoids}--\ref{sec:linkedtwist}, we obtain PDs from point cloud data using the Vietoris--Rips filtered simplicial complex, and we use ML techniques to classify the point clouds by their topological features.

In setting (ii), our input is a real valued function $f\colon X \to\bR$ defined on some domain $X$. One way to understand the behavior of map $f$ is to understand the topology of its sublevel sets $f^{-1}((-\infty,\epsilon])$. By letting $\epsilon$ increase, we obtain an increasing sequence of topological spaces, called the sublevel set filtration, which we introduce in more detail in \S\ref{app:PD_Functions}. In \S\ref{sec:aks}, we obtain PDs from surfaces $u\colon[0,1]^2\to\bR$ produced from the Kuramoto-Sivashinsky equation, and we use ML techniques to perform parameter classification.

In both settings the output of the persistent homology computation is a collection of PDs encoding homological features of the data across a range of scales.  Let $\cD$ denote the set of all PDs. The space $\cD$ can be endowed with metrics as studied by \citet{stabilityPD} and \citet{probabilityonPD}. The \emph{$p$-Wasserstein distance} defined between two PDs $B$ and $B'$ is given by
\[W_p(B, B') = \underset{\gamma: B \rightarrow B'}{\inf} \Big( \underset{u \in B}{\sum} ||u-\gamma(u)||_{\infty}^p \Big) ^{1/p},\]
where $1 \leq p < \infty$ and $\gamma$ ranges over bijections between $B$ and $B'$. Another standard choice of distance between diagrams is $\displaystyle W_{\infty}(B, B') = \underset{\gamma: B \rightarrow B'}{\inf}\underset{u \in B}{\sup} ||u- \gamma(u)||_{\infty},$
referred to as the \emph{bottleneck distance}.  These metrics allow us to measure the (dis)similarity between the homological characteristics of two datasets.

\section{Persistence Images}\label{sec:persistenceimages}

We propose a method for converting a PD into a vector while maintaining an interpretable connection to the original PD. Figure~\ref{fig:pipeline} illustrates the pipeline from data to PI starting with spectral and spatial information in $\mathbb{R}^5$ from an immunofluorescent image of a circulating tumor cell \citep{emersonCTC}.

Precisely, let $B$ be a PD in birth-death coordinates\footnote{We omit points that correspond to features with infinite persistence, e.g.\ the $H_0$ feature corresponding to the connectedness of the complete simplicial complex.}. Let $T \colon \bR^2 \to \bR^2$ be the linear transformation $T(x,y)=(x,y-x)$, and let $T(B)$ be the transformed multiset in birth-persistence coordinates\footnote{Instead of birth-persistence coordinates, one could also use other choices such as birth-death or (average size)-persistence coordinates. Our stability results (\S\ref{sec:stability}) still hold with only a slight modification to the constants.}, where each point $(x,y)\in B$ corresponds to a point $(x,y-x)\in T(B)$. Let $\phi_u\colon\bR^2\to\bR$ be a differentiable probability distribution with mean $u=(u_x,u_y)\in\bR^2$.  In all of our applications, we choose this distribution to be the normalized symmetric Gaussian $\phi_u=g_u$ with mean $u$ and variance $\sigma^2$ defined as

$$\displaystyle g_u(x,y) = \dfrac{1}{2\pi \sigma^2} e^{-[(x-u_x)^2+(y-u_y)^2]/2\sigma^2}\text{.}$$ 

\noindent Fix a nonnegative weighting function $f\colon \bR^2 \to \bR$ that is zero along the horizontal axis, continuous, and piecewise differentiable. With these ingredients we transform the PD into a scalar function over the plane.

\begin{definition}
For $B$ a PD, the corresponding \emph{persistence surface} $\rho_B \colon \bR^2 \to \bR$ is the function  
\[\displaystyle\rho_B(z) = \sum_{u\in T(B)} f(u)\phi_u(z).\]
\end{definition}
The weighting function $f$ is critical to ensure the transformation from a PD to a persistence surface is stable, which we prove in \S\ref{sec:stability}.

Finally, the surface $\rho_B(z)$ is reduced to a finite-dimensional vector by discretizing a relevant subdomain and integrating $\rho_B(z)$ over each region in the discretization. In particular, we fix a grid in the plane with $n$ boxes (pixels) and assign to each the integral of $\rho_B$ over that region. 

\begin{definition}
For $B$ a PD, its \emph{persistence image} is the collection of pixels $I(\rho_B)_p=\iint_p \rho_B\;dydx$.
\end{definition}

\begin{figure}
\begin{centering}
\includegraphics[width=1.0\textwidth]{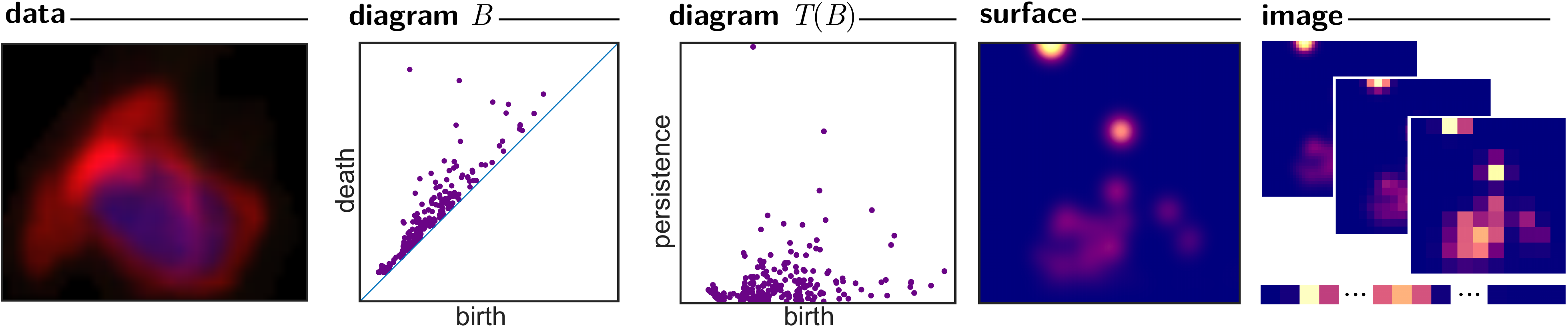}
\caption{Algorithm pipeline to transform data into a persistence image.}
\label{fig:pipeline}
\end{centering}
\end{figure}

PIs provide a convenient way to combine PDs of different homological dimensions into a single object. Indeed, suppose in an experiment the PDs for $H_0$, $H_1$, \ldots, $H_k$ are computed. One can concatenate the PI vectors for $H_0$, $H_1$, \ldots, $H_k$ into a single vector representing all homological dimensions simultaneously, and then use this concatenated vector as input into ML algorithms.  

There are three choices the user makes when generating a PI: the resolution, the distribution  (and its associated parameters), and the weighting function. 

\textbf{Resolution of the image:} The resolution of the PI corresponds to the grid being overlaid on the PD. The classification accuracy in the PI framework appears to be fairly robust to choice of resolution, as discussed in \S\ref{subsec:param_selection} and \citep{zeppelzauer2016topological}.

\textbf{The Distribution:}
Our method requires the choice of a probability distribution which is associated to each of the points in the PD. The examples in this paper use a Gaussian centered at each point, but other distributions may be used. The Gaussian distribution depends on a choice of variance: we leave this choice as an open problem, though the experiments in \S\ref{subsec:param_selection} and \citep{zeppelzauer2016topological} show a low sensitivity to the choice of variance.

\textbf{The Weighting Function:} In order for our stability results in \S\ref{sec:stability} to hold, our weighting function $f\colon \bR^2 \to \bR$ must be zero along the horizontal axis (the analogue of the diagonal in birth-persistence coordinates), continuous, and piecewise differentiable. A simple choice is a weighting function that depends only on the vertical persistence coordinate $y$. In order to weight points of higher persistence more heavily, functions which are nondecreasing in $y$, such as sigmoidal functions, are a natural choice. However, in certain applications such as \citet{bendich2014persistent} it may be points of small or medium persistence that perform best for ML tasks, and hence, one may choose to use more general weighting functions. In our experiments in \S\ref{sec:experiments}, we use a piecewise linear weighting function $f\colon \bR^2 \to \bR$ which only depends on the persistence coordinate $y$. Given $b>0$, define $w_b\colon \bR \to \bR$ via
\[w_b(t)=\begin{cases}
0 &\mbox{if }t\le0,\\
\frac{t}{b} &\mbox{if }0<t<b\mbox{, and}\\
1 &\mbox{if }t\ge b.
\end{cases}\]
We use $f(x,y)=w_b(y)$, where $b$ is the persistence value of the most persistent feature in all trials of the experiment. 

In the event that the birth coordinate is zero for all points in the PD, as is often the case for $H_{0}$, it is possible to generate a 1-dimensional (instead of 2-dimensional) PI using 1-dimensional distributions. This is the approach we adopt. Appendix \ref{app:PIEx} displays examples of PIs for the common topological spaces of a circle and a torus with various parameter choices.

\section{Stability of Persistence Surfaces and Images}\label{sec:stability}

Due to the unavoidable presence of noise or measurement error, tools for data analysis ought to be stable with respect to small perturbations of the inputs. Indeed, one reason for the popularity of PDs in topological data analysis is that the transformation of a data set to a PD is stable (Lipschitz) with respect to the bottleneck metric and -- given some mild assumptions about the underlying data -- is also stable with respect to the Wasserstein metrics (\cite{Edelsbrunner10}). In \S\ref{subsec:stability_general}, we show that persistence surfaces and images are stable with respect to the 1-Wasserstein distance between PDs. In \S\ref{subsec:stability_Gaussian}, we prove stability with improved constants when the PI is constructed using the Gaussian distribution.

\subsection{Stability for general distributions}\label{subsec:stability_general}

For $h\colon\bR^2\to\bR$ differentiable, define $|\nabla h|=\sup_{z\in\bR^2}\|\nabla h(z)\|_2$ to be the maximal norm of the gradient vector of $h$, i.e.\ the largest directional derivative of $h$. It follows by the fundamental theorem of calculus for line integrals that for all $u,v\in\bR^2$, we have
\begin{equation}\label{eq:directional-derivative}
|h(u)-h(v)|\le |\nabla h|\; \|u-v\|_2.
\end{equation}

We may safely denote $|\nabla \phi_u|$ by $|\nabla \phi|$ and $\|\phi_u\|_\infty$ by $\|\phi\|_\infty$ since the maximal directional derivative and supremum of a fixed differentiable probability distribution are invariant under translation. Note that
\begin{equation}\label{eq:directional-derivative-gaussian}
\|\phi_u-\phi_v\|_\infty \le |\nabla \phi|\; \|u-v\|_2
\end{equation}
since for any $z\in\bR^2$ we have $|\phi_u(z)-\phi_v(z)| = |\phi_u(z)-\phi_u(z+u-v)| \le |\nabla \phi|\; \|u-v\|_2$.

Recall that our nonnegative weighting function $f\colon \bR^2 \to \bR$ is defined to be zero along the horizontal axis, continuous, and piecewise differentiable.

\begin{lemma}\label{lem:L-infty-rho}
For $u,v\in\bR^2$, we have $\|f(u)\phi_u-f(v)\phi_v\|_\infty \le \bigl(\|f\|_\infty |\nabla \phi| + \|\phi\|_\infty |\nabla f|\bigr)\|u-v\|_2$.
\end{lemma}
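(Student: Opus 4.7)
The plan is to use the standard ``add-and-subtract'' splitting that converts a product difference into two single-factor differences, then apply the Lipschitz-type inequalities already established in the paper.

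First, I would write
\[
f(u)\phi_u - f(v)\phi_v \;=\; f(u)\bigl(\phi_u - \phi_v\bigr) \;+\; \bigl(f(u) - f(v)\bigr)\phi_v,
\]
take the supremum norm of both sides, and apply the triangle inequality to obtain
\[
\|f(u)\phi_u - f(v)\phi_v\|_\infty \;\le\; |f(u)|\,\|\phi_u - \phi_v\|_\infty \;+\; |f(u)-f(v)|\,\|\phi_v\|_\infty.
\]
Next, I would bound each factor using the ingredients already available. The trivial bounds $|f(u)|\le\|f\|_\infty$ and $\|\phi_v\|_\infty=\|\phi\|_\infty$ (the latter being translation invariance of the supremum, noted in the paragraph preceding equation~\eqref{eq:directional-derivative-gaussian}) handle the multiplicative factors. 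Equation~\eqref{eq:directional-derivative-gaussian} gives $\|\phi_u-\phi_v\|_\infty\le|\nabla\phi|\,\|u-v\|_2$, and equation~\eqref{eq:directional-derivative} applied to $f$ gives $|f(u)-f(v)|\le|\nabla f|\,\|u-v\|_2$. Combining, we reach
\[
\|f(u)\phi_u - f(v)\phi_v\|_\infty \;\le\; \bigl(\|f\|_\infty\,|\nabla\phi| \;+\; \|\phi\|_\infty\,|\nabla f|\bigr)\,\|u-v\|_2,
\]
which is the claim.

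The only delicate point is that equation~\eqref{eq:directional-derivative} was stated under the hypothesis that $h$ is differentiable, whereas $f$ is only assumed continuous and piecewise differentiable. This will be the main obstacle to address. The remedy is routine: parametrize the straight segment from $u$ to $v$ by $\gamma(t)=u+t(v-u)$, $t\in[0,1]$, and observe that $f\circ\gamma$ is continuous on $[0,1]$ and piecewise $C^1$ since $\gamma$ crosses only finitely many pieces on which $f$ is smooth. On each such sub-interval the fundamental theorem of calculus applies, and summing the contributions yields $|f(u)-f(v)|\le\int_0^1|\nabla f(\gamma(t))|\,\|u-v\|_2\,dt\le|\nabla f|\,\|u-v\|_2$, so equation~\eqref{eq:directional-derivative} carries over to the piecewise differentiable setting. (If the segment meets a non-smooth locus in a set of positive measure, one perturbs the path slightly and uses continuity of $f$ to pass to the limit.) With this justification the proof above is complete.
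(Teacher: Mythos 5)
Your proof is correct and follows essentially the same route as the paper's: the add-and-subtract decomposition $f(u)(\phi_u-\phi_v)+(f(u)-f(v))\phi_v$, the triangle inequality, and then the bounds \eqref{eq:directional-derivative-gaussian} and \eqref{eq:directional-derivative}. Your extra remark justifying why \eqref{eq:directional-derivative} still applies to the merely piecewise differentiable $f$ is a point the paper glosses over, and is a worthwhile addition, but it does not change the substance of the argument.
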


\begin{proof}
For any $z\in\bR^2$, we have
\begin{align*}
|f(u)\phi_u(z)-f(v)\phi_v(z)| &= \bigl|f(u)\bigl(\phi_u(z)-\phi_v(z)\bigl) + \bigl(f(u)-f(v)\bigl)\phi_v(z)\bigr| \\
&\le \|f\|_\infty\; |\phi_u(z)-\phi_v(z)| + \|\phi\|_\infty |f(u)-f(v)| \\
&\le \|f\|_\infty |\nabla \phi|\; \|u-v\|_2 + \|\phi\|_\infty |\nabla f|\; \|u-v\|_2 &&\mbox{by \eqref{eq:directional-derivative-gaussian} and \eqref{eq:directional-derivative}}\\
&= \bigl(\|f\|_\infty |\nabla \phi| + \|\phi\|_\infty |\nabla f|\bigr)\|u-v\|_2.
\end{align*}
\end{proof}

\begin{theorem}\label{thm:surface-stable-general}
The persistence surface $\rho$ is stable with respect to the 1-Wasserstein distance between diagrams: for $B,B'\in\cD$ we have
\[ \| \rho_B - \rho_{B'}\|_\infty \leq \sqrt{10} \bigl(\|f\|_\infty |\nabla \phi| + \|\phi\|_\infty |\nabla f|\bigr) W_1(B,B'). \]
\end{theorem}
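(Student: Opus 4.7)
The plan is to fix an almost-optimal bijection $\gamma\colon B\to B'$ realizing $W_1(B,B')$ (using the standard convention that each diagram contains the diagonal with infinite multiplicity), rewrite the difference $\rho_B-\rho_{B'}$ pointwise as a single sum of per-pair differences $f(u)\phi_u-f(\tilde\gamma(u))\phi_{\tilde\gamma(u)}$ indexed by the induced bijection $\tilde\gamma:=T\gamma T^{-1}$ on $T(B)$, bound each summand via Lemma~\ref{lem:L-infty-rho}, and finally convert the resulting sum of Euclidean distances in birth-persistence coordinates back to the $\ell^\infty$ distances in birth-death coordinates that define $W_1$.

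Since $f$ vanishes on the horizontal axis (the image of the diagonal under $T$), any pair both of whose endpoints lie on that axis contributes nothing to the sum $\rho_B(z)-\rho_{B'}(z) = \sum_{u\in T(B)}[f(u)\phi_u(z)-f(\tilde\gamma(u))\phi_{\tilde\gamma(u)}(z)]$, which is therefore effectively finite. The triangle inequality together with Lemma~\ref{lem:L-infty-rho} then bounds each summand pointwise by $(\|f\|_\infty|\nabla\phi|+\|\phi\|_\infty|\nabla f|)\|u-\tilde\gamma(u)\|_2$, and taking a supremum in $z$ yields
\[
\|\rho_B-\rho_{B'}\|_\infty \le \bigl(\|f\|_\infty|\nabla\phi|+\|\phi\|_\infty|\nabla f|\bigr)\sum_{u\in T(B)}\|u-\tilde\gamma(u)\|_2.
\]

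It remains to compare each Euclidean summand to the $\ell^\infty$ distance $\|w-\gamma(w)\|_\infty$ between the birth-death preimages $w:=T^{-1}(u)$. Writing $\delta=w-\gamma(w)$, the identity $u-\tilde\gamma(u)=T(\delta)=(\delta_1,\delta_2-\delta_1)$ together with the elementary bounds $(\delta_2-\delta_1)^2\le 2(\delta_1^2+\delta_2^2)$, $\delta_i^2\le\|\delta\|_2^2$, and $\|\delta\|_2^2\le 2\|\delta\|_\infty^2$ gives the comparison $\|u-\tilde\gamma(u)\|_2\le\sqrt{10}\,\|w-\gamma(w)\|_\infty$. Substituting, summing, and taking the infimum over bijections $\gamma$ converts the right-hand sum into $W_1(B,B')$ and yields the claimed inequality.

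\textbf{Main obstacle.} The one conceptually delicate point is the bookkeeping at the start: one has to combine the full-diagonal convention with the hypothesis $f\equiv 0$ on the horizontal axis so that pairing an off-diagonal point of one diagram with diagonal ``padding'' of the other produces no uncontrolled terms. Once this setup is in place, the remaining argument is a direct computation driven by Lemma~\ref{lem:L-infty-rho} and the linear-algebraic behavior of $T$.
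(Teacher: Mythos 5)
Your proposal is correct and follows essentially the same route as the paper: fix a (near-)optimal matching, decompose $\rho_B-\rho_{B'}$ into per-pair differences, bound each via Lemma~\ref{lem:L-infty-rho}, and convert the Euclidean distances in birth-persistence coordinates to the $\ell^\infty$ distances defining $W_1$ at the cost of the factor $\sqrt{10}$. Your treatment is in fact slightly more careful than the paper's on two points it glosses over --- making the induced bijection $\tilde\gamma=T\gamma T^{-1}$ explicit, and noting that the vanishing of $f$ on the horizontal axis neutralizes diagonal padding --- but these are refinements of the same argument, not a different one.
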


\begin{proof}
Since we assume $B$ and $B'$ consist of finitely many points, there exists a matching $\gamma$ that achieves the infimum in the Wasserstein distance. Then 
\begin{align*}
\| \rho_B - \rho_{B'} \|_\infty & = \|\sum_{u\in T(B)} f(u)\phi_u - \sum_{u\in T(B)} f(\gamma(u))\phi_{\gamma(u)}\|_\infty \\
&\leq \sum_{u\in T(B)} \|f(u)\phi_u - f(\gamma(u))\phi_{\gamma(u)} \|_\infty \\
&\leq \bigl(\|f\|_\infty |\nabla \phi| + \|\phi\|_\infty |\nabla f|\bigr) \sum_{u\in T(B)} \|u-\gamma(u)\|_2 &&\mbox{by Lemma~\ref{lem:L-infty-rho}.} \\
&\leq \sqrt{2} \bigl(\|f\|_\infty |\nabla \phi| + \|\phi\|_\infty |\nabla f|\bigr) \sum_{u\in T(B)} \|u-\gamma(u)\|_\infty &&\mbox{since }\|\cdot\|_2 \leq \sqrt{2} \|\cdot\|_\infty\mbox{ in }\bR^2\\
&\leq \sqrt{10} \bigl(\|f\|_\infty |\nabla \phi| + \|\phi\|_\infty |\nabla f|\bigr) \sum_{u\in B} \|u-\gamma(u)\|_\infty &&\mbox{since } \|T(\cdot)\|_2 \leq \sqrt{5} \|\cdot\|_\infty\\
&= \sqrt{10} \bigl(\|f\|_\infty |\nabla \phi| + \|\phi\|_\infty |\nabla f|\bigr) W_1(B,B').
\end{align*}
\end{proof}

It follows that persistence images are also stable.

\begin{theorem}\label{thm:image-stable-general}
The persistence image $I(\rho_B)$ is stable with respect to the 1-Wasserstein distance between diagrams. More precisely, if $A$ is the maximum area of any pixel in the image, $A'$ is the total area of the image, and $n$ is the number of pixels in the image, then
\begin{align*}
\| I(\rho_B) - I(\rho_{B'})\|_\infty &\leq \sqrt{10} A \bigl(\|f\|_\infty |\nabla \phi| + \|\phi\|_\infty |\nabla f|\bigr) W_1(B,B')\\
\| I(\rho_B) - I(\rho_{B'})\|_1 &\leq \sqrt{10} A' \bigl(\|f\|_\infty |\nabla \phi| + \|\phi\|_\infty |\nabla f|\bigr) W_1(B,B')\\
\| I(\rho_B) - I(\rho_{B'})\|_2 &\leq \sqrt{10n} A \bigl(\|f\|_\infty |\nabla \phi| + \|\phi\|_\infty |\nabla f|\bigr) W_1(B,B').
\end{align*}
\end{theorem}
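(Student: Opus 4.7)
The plan is to reduce all three inequalities to the already-established $L^\infty$ bound on persistence surfaces (Theorem~\ref{thm:surface-stable-general}) by exploiting the fact that each pixel value $I(\rho_B)_p = \iint_p \rho_B\, dy\, dx$ is obtained by integrating $\rho_B$ against the characteristic function of a pixel. The crucial elementary inequality is that for each pixel $p$,
\[
\bigl|I(\rho_B)_p - I(\rho_{B'})_p\bigr| = \Bigl|\iint_p (\rho_B - \rho_{B'})\,dy\,dx\Bigr| \le \mathrm{area}(p)\cdot\|\rho_B - \rho_{B'}\|_\infty,
\]
which is the link that transports the surface-level stability to each pixel value.

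From this pointwise bound the three cases follow by straightforward aggregation. For the $L^\infty$ norm, I take the maximum over all pixels, bound $\mathrm{area}(p)\le A$, and then invoke Theorem~\ref{thm:surface-stable-general}. For the $L^1$ norm, I sum over all pixels and use $\sum_p \mathrm{area}(p) = A'$. For the $L^2$ norm, the cleanest route is to apply the discrete inequality $\|v\|_2 \le \sqrt{n}\,\|v\|_\infty$ for a vector $v$ with $n$ entries, combined with the already-derived $L^\infty$ estimate — this yields the factor $\sqrt{10n}\,A$ as stated. (Alternatively one could use Cauchy–Schwarz entry by entry, which would give a bound involving $\sqrt{\sum_p \mathrm{area}(p)^2}\le \sqrt{n}\,A$, matching the claim.)

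In each case, the final step is to chain with Theorem~\ref{thm:surface-stable-general} to replace $\|\rho_B - \rho_{B'}\|_\infty$ by $\sqrt{10}\bigl(\|f\|_\infty |\nabla\phi| + \|\phi\|_\infty |\nabla f|\bigr) W_1(B,B')$. No fresh estimates are needed; the weighting function, its gradient bound, and the Wasserstein matching have all been absorbed into that earlier theorem.

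There is no real obstacle here — the argument is essentially bookkeeping, and the main thing to be careful about is the distinction between $A$ (a uniform bound on individual pixel areas, which controls $L^\infty$ and $L^2$) and $A'$ (the total area, which is what appears in the $L^1$ sum). The only other small subtlety is that for the $L^2$ inequality one must not inadvertently use $A'$ in place of $nA$; using $\|v\|_2\le\sqrt{n}\|v\|_\infty$ on the pixel vector keeps $A$ (not $A'$) in the bound, which matches the stated constant $\sqrt{10n}\,A$.
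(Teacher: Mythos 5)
Your proposal is correct and follows essentially the same route as the paper: the pointwise pixel bound $|I(\rho_B)_p - I(\rho_{B'})_p| \le A(p)\,\|\rho_B-\rho_{B'}\|_\infty$, aggregation by maximum, sum, or the inequality $\|v\|_2\le\sqrt{n}\,\|v\|_\infty$, and then chaining with Theorem~\ref{thm:surface-stable-general}. Nothing is missing.
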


The constant for the $L^2$ norm bound containing $\sqrt{n}$ goes to infinity as the resolution of the image increases. For this reason, in Theorem~\ref{thm:image-stable-Gaussians} we provide bounds with better constants in the specific case of Gaussian distributions.

\begin{proof}
Note for any pixel $p$ with area $A(p)$ we have 
\begin{align*}
|I(\rho_B)_p - I(\rho_{B'})_p| &= \Bigl| \iint_p \rho_B\; dydz- \iint_p \rho_{B'}\; dydx \Bigr|\\
&= \Bigl| \iint_p \rho_B-\rho_{B'}\; dydx \Bigr|\\
&\le A(p) \| \rho_B - \rho_{B'} \|_\infty\\
&\leq \sqrt{10} A(p) \bigl(\|f\|_\infty |\nabla \phi| + \|\phi\|_\infty |\nabla f|\bigr) W_1(B,B')&&\mbox{by Theorem~\ref{thm:surface-stable-general}.}
\end{align*}
Hence we have
\begin{align*}
\| I(\rho_B) - I(\rho_{B'})\|_\infty &\le \sqrt{10} A \bigl(\|f\|_\infty |\nabla \phi| + \|\phi\|_\infty |\nabla f|\bigr) W_1(B,B')\\
\| I(\rho_B) - I(\rho_{B'})\|_1 &\le \sqrt{10} A' \bigl(\|f\|_\infty |\nabla \phi| + \|\phi\|_\infty |\nabla f|\bigr) W_1(B,B')\\
\| I(\rho_B) - I(\rho_{B'})\|_2 &\le \sqrt{n} \| I(\rho_B) - I(\rho_{B'})\|_\infty\\
&\le \sqrt{10n} A \bigl(\|f\|_\infty |\nabla \phi| + \|\phi\|_\infty |\nabla f|\bigr) W_1(B,B').
\end{align*}
\end{proof}

\begin{rem}\label{rem:only-1-stable}
Recall $\cD$ is the set of all PDs. The kernel $k\colon\cD\times\cD\to\bR$ defined by $k(B,B')=\langle I(\rho_B), I(\rho_{B'})\rangle_{\bR^n}$ is non-trivial and additive, and hence Theorem~3 of \citet{reininghaus2015stable} implies that $k$ is not stable with respect to $W_p$ for any $1<p\le\infty$. That is, when $1<p\le\infty$ there is no constant $c$ such that for all $B, B'\in\cD$ we have $\|I(\rho_B)-I(\rho_{B'})\|_2 \le c W_p(B,B')$.
\end{rem}

\subsection{Stability for Gaussian distributions}\label{subsec:stability_Gaussian}

In this section, we provide stability results with better constants in the case of Gaussian distributions. With Gaussian distributions we can control not only the $L^\infty$ distance but also the $L^1$ distance between two persistence surfaces.

Our results for 2-dimensional Gaussians will rely on the following lemma for 1-dimensional Gaussians. 

\begin{lemma}\label{lem:l1surfbound_1D}
For $u,v\in\bR$, let $g_u,g_v \colon \bR \to \bR$ be the normalized 1-dimensional Gaussians, defined via $\displaystyle g_u(z) = \frac{1}{\sigma\sqrt{2\pi}}e^{-(z-u)^2/2 \sigma^2}$. If $a,b>0$, then \[\|ag_u - bg_v\|_1 \le |a-b|+\sqrt{\frac{2}{\pi}}\frac{\min\{a,b\}}{\sigma}|u-v|.\]
\end{lemma}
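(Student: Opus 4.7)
My plan is to reduce the statement to a bound on $\|g_u-g_v\|_1$ via the triangle inequality, then compute that $L^1$ difference explicitly using the symmetry of the 1-D Gaussian and conclude using a trivial bound on the Gaussian tail integral.

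\textbf{Step 1: Reduction.} Without loss of generality assume $a\le b$, so $\min\{a,b\}=a$. Write $ag_u-bg_v = a(g_u-g_v) + (a-b)g_v$. Since each $g_u,g_v$ is a probability density with $\|g_v\|_1=1$, the triangle inequality yields
\[
\|ag_u-bg_v\|_1 \;\le\; a\,\|g_u-g_v\|_1 + |a-b|.
\]
So it suffices to show $\|g_u-g_v\|_1 \le \sqrt{2/\pi}\,|u-v|/\sigma$.

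\textbf{Step 2: Compute $\|g_u-g_v\|_1$ exactly.} WLOG $u<v$. The sign of $g_u(z)-g_v(z)$ is controlled by $(z-u)^2$ versus $(z-v)^2$, so $g_u(z)\ge g_v(z)$ exactly when $z\le m:=(u+v)/2$. Letting $\Phi$ be the standard normal CDF and $t:=(v-u)/(2\sigma)$, a direct computation gives
\[
\int_{-\infty}^m (g_u-g_v)\,dz \;=\; \Phi(t)-\Phi(-t) \;=\; 2\Phi(t)-1,
\]
and by symmetry the integral of $g_v-g_u$ on $[m,\infty)$ equals the same quantity, so $\|g_u-g_v\|_1 = 4\Phi(t)-2$.

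\textbf{Step 3: Finish with a trivial integral bound.} Writing $4\Phi(t)-2 = \frac{4}{\sqrt{2\pi}}\int_0^t e^{-s^2/2}\,ds$ and using $e^{-s^2/2}\le 1$ gives
\[
\|g_u-g_v\|_1 \;\le\; \frac{4}{\sqrt{2\pi}}\,t \;=\; \sqrt{\frac{2}{\pi}}\cdot\frac{v-u}{\sigma} \;=\; \sqrt{\frac{2}{\pi}}\cdot\frac{|u-v|}{\sigma}.
\]
Combining with Step 1 yields the claim. There is no real obstacle here; the only mildly delicate piece is setting up the sign-change point $m$ correctly so that $\|g_u-g_v\|_1$ collapses to a single expression in $\Phi$. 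The constant $\sqrt{2/\pi}$ emerges naturally, and the $\min\{a,b\}$ factor is exactly the price paid for splitting the difference so that only one copy of $\|g_u-g_v\|_1$ appears.
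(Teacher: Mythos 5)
Your proof is correct, and it takes a genuinely different route from the paper's. The paper computes $\|ag_u-bg_v\|_1$ exactly for general $a,b>0$: in Appendix~\ref{app:proofs} it locates the unique crossing point $z^*$ of $ag_u$ and $bg_v$, evaluates the four resulting tail integrals to obtain the closed form $F(v-u)$ in terms of $\mathrm{Erf}$, and then establishes the bound by computing $\|F'\|_\infty=\sqrt{2/\pi}\,\min\{a,b\}/\sigma$ via the real roots of $F''$. You instead peel off the amplitude mismatch first, writing $ag_u-bg_v=a(g_u-g_v)+(a-b)g_v$ with $a=\min\{a,b\}$, which reduces everything to the equal-mass case; there the crossing point is simply the midpoint $m=(u+v)/2$, the $L^1$ distance collapses to $4\Phi(t)-2$ with $t=(v-u)/(2\sigma)$, and the elementary bound $e^{-s^2/2}\le1$ finishes. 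Your decomposition is cleaner: it avoids both the two-parameter $\mathrm{Erf}$ bookkeeping and the second-derivative analysis, and it still recovers exactly the constant in the lemma (which is sharp, since $\sqrt{2/\pi}/\sigma$ is the slope of $4\Phi(t)-2$ at $t=0$). What the paper's approach buys in exchange is the exact value of $\|ag_u-bg_v\|_1$ for all $a\ne b$, which is of some independent interest but is not needed anywhere downstream --- Lemma~\ref{lem:l1surfbound_2D} only invokes the stated inequality, so your argument substitutes without loss.
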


\begin{proof} Let $\text{Erf}(t) = \frac{2}{\sqrt{\pi}} \int_{0}^{t} e^{-u^2} du$. We show in Appendix~\ref{app:proofs} that \begin{equation}\label{eq:Erf}
\|ag_u - b g_v\|_1 = F(v-u),
\end{equation}
where $F\colon\bR\to\bR$ is defined by
\[ F(z) = \begin{cases}
|a-b| &\mbox{if }z=0\\
\left|a \text{Erf}\left(\frac{z^2+2\sigma^2\ln(a/b)}{z \sigma 2\sqrt{2}}\right)-b\text{Erf}\left(\frac{-z^2+2\sigma^2\ln(a/b)}{z \sigma 2\sqrt{2}}\right)\right| & \mbox{otherwise.}
\end{cases} \]
Furthermore, function $F$ is differentiable for $z\neq 0$. By searching for real roots of $F''$, one can show

\[ \| F' \|_\infty = \sqrt{\frac{2}{\pi}}\frac{\min\{a,b\}}{\sigma}, \quad\mbox{and hence } \quad F(z)\le|a-b|+\sqrt{\frac{2}{\pi}}\frac{\min\{a,b\}}{\sigma}|z|.\]
The result follows by letting $z=v-u$.
\end{proof}

\begin{lemma}\label{lem:l1surfbound_2D}
For $u,v\in\bR^2$, let $g_u,g_v \colon \bR^2 \to \bR$ be normalized 2-dimensional Gaussians. Then
\[ \displaystyle  \|f(u)g_u - f(v)g_v\|_1 \le \left(|\nabla f|+\sqrt{\frac{2}{\pi}}\frac{\min\{f(u),f(v)\}}{\sigma}\right)\|u-v\|_2. \]
\end{lemma}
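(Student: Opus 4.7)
\medskip

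\noindent\textbf{Proof plan for Lemma~\ref{lem:l1surfbound_2D}.} The plan is to reduce the 2-dimensional estimate to the 1-dimensional Lemma~\ref{lem:l1surfbound_1D} by exploiting the rotational symmetry of the normalized symmetric Gaussian, then peeling off one coordinate via Fubini.

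First, observe that the quantities $\|f(u)g_u - f(v)g_v\|_1$ and $\|u-v\|_2$ are both invariant under an orthogonal change of coordinates of $\bR^2$ (the $L^1$ norm because Lebesgue measure is rotation-invariant, and because $g_u$ is a symmetric Gaussian so $g_u \circ R^{-1} = g_{Ru}$ for any rotation $R$). Hence we may rotate coordinates and assume without loss of generality that the segment from $u$ to $v$ is parallel to the $x$-axis. Writing $u=(u_x,u_y)$, we then have $v=(v_x,u_y)$ with $|v_x-u_x|=\|u-v\|_2$.

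Next, let $\widetilde{g}_t(s)=\frac{1}{\sigma\sqrt{2\pi}}e^{-(s-t)^2/2\sigma^2}$ denote the normalized 1-dimensional Gaussian with mean $t$. Since $u$ and $v$ share the same $y$-coordinate, the 2-dimensional Gaussians factor with a common $y$-component:
\[ g_u(x,y)=\widetilde{g}_{u_x}(x)\,\widetilde{g}_{u_y}(y), \qquad g_v(x,y)=\widetilde{g}_{v_x}(x)\,\widetilde{g}_{u_y}(y). \]
Consequently $f(u)g_u(x,y)-f(v)g_v(x,y)=\widetilde{g}_{u_y}(y)\bigl(f(u)\widetilde{g}_{u_x}(x)-f(v)\widetilde{g}_{v_x}(x)\bigr)$, and applying Fubini together with $\|\widetilde{g}_{u_y}\|_1=1$ gives
\[ \|f(u)g_u-f(v)g_v\|_1 = \|f(u)\widetilde{g}_{u_x}-f(v)\widetilde{g}_{v_x}\|_1. \]

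Now I apply Lemma~\ref{lem:l1surfbound_1D} with $a=f(u)$, $b=f(v)$, and the 1-dimensional means $u_x,v_x$, obtaining
\[ \|f(u)\widetilde{g}_{u_x}-f(v)\widetilde{g}_{v_x}\|_1 \le |f(u)-f(v)| + \sqrt{\tfrac{2}{\pi}}\,\tfrac{\min\{f(u),f(v)\}}{\sigma}\,|v_x-u_x|. \]
Finally, $|f(u)-f(v)|\le |\nabla f|\,\|u-v\|_2$ by \eqref{eq:directional-derivative}, and $|v_x-u_x|=\|u-v\|_2$ by construction of the coordinates; combining yields the claimed bound.

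The only mild obstacle is that Lemma~\ref{lem:l1surfbound_1D} is stated for strictly positive coefficients $a,b$, whereas $f$ may vanish at $u$ or $v$ (the weighting function is required only to be nonnegative, continuous, and zero on the horizontal axis). This is handled by a direct check: if $f(v)=0$ then the left side equals $f(u)\|g_u\|_1=f(u)=|f(u)-f(v)|\le |\nabla f|\,\|u-v\|_2$, which is the stated inequality since the second term on the right vanishes; the case $f(u)=0$ is symmetric. Alternatively, both sides of the 1-dimensional inequality are continuous in $(a,b)$, so the result extends to $a,b\ge 0$ by taking limits.
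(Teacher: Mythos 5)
Your proof is correct and follows essentially the same route as the paper's: both reduce to the 1-dimensional Lemma~\ref{lem:l1surfbound_1D} by rotating coordinates so that $u-v$ is axis-aligned (the paper does this via an explicit change of variables by $R_\theta$), factoring the symmetric Gaussian, integrating out the common coordinate via Fubini, and then applying the gradient bound \eqref{eq:directional-derivative} to $|f(u)-f(v)|$. Your explicit handling of the degenerate case $f(u)=0$ or $f(v)=0$ (where Lemma~\ref{lem:l1surfbound_1D} assumes $a,b>0$) is a small point the paper leaves implicit, but it does not alter the argument.
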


The proof of Lemma \ref{lem:l1surfbound_2D} is shown in Appendix \ref{app:proofs} and uses a similar construction to that of Lemma \ref{lem:l1surfbound_1D}. We are prepared to prove the stability of persistence surfaces with Gaussian distributions.

\begin{theorem}\label{thm:surface-stable-Gaussians}
The persistence surface $\rho$ with Gaussian distributions is stable with respect to the 1-Wasserstein distance between diagrams: for $B,B'\in\cD$ we have
\[ \| \rho_B - \rho_{B'}\|_1 \leq \left(\sqrt{5}|\nabla f|+\sqrt{\frac{10}{\pi}}\frac{\|f\|_\infty}{\sigma}\right) W_1(B,B').\]
\end{theorem}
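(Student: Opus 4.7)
The plan is to mimic almost verbatim the proof of Theorem~\ref{thm:surface-stable-general}, but replacing the $L^\infty$ pointwise bound on $\|f(u)\phi_u - f(v)\phi_v\|_\infty$ with the $L^1$ bound provided by Lemma~\ref{lem:l1surfbound_2D}. Concretely, because $B$ and $B'$ are finite, I first pick a bijection $\gamma \colon B \to B'$ achieving the infimum in $W_1(B,B')$. Then, writing $\rho_B - \rho_{B'} = \sum_{u \in T(B)} \bigl(f(u)g_u - f(\gamma(u))g_{\gamma(u)}\bigr)$ (where $\gamma$ is transported to $T$-coordinates in the obvious way), I apply the $L^1$ triangle inequality to move the norm inside the sum.

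Next I invoke Lemma~\ref{lem:l1surfbound_2D} termwise, giving
\[\|f(u)g_u - f(\gamma(u))g_{\gamma(u)}\|_1 \le \left(|\nabla f| + \sqrt{\tfrac{2}{\pi}}\,\frac{\min\{f(u),f(\gamma(u))\}}{\sigma}\right)\|u - \gamma(u)\|_2.\]
I then simply bound $\min\{f(u),f(\gamma(u))\} \le \|f\|_\infty$, pulling the resulting constant $|\nabla f| + \sqrt{2/\pi}\,\|f\|_\infty/\sigma$ out of the sum.

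The remaining step is the conversion from the $\ell^2$-distances appearing in $T$-coordinates back to the $\ell^\infty$-distances used by $W_1$. As already justified in the proof of Theorem~\ref{thm:surface-stable-general}, for any $u,v \in \bR^2$ one has $\|T(u)-T(v)\|_2 \le \sqrt{5}\,\|u-v\|_\infty$ (a direct computation using $T(x,y)=(x,y-x)$). Summing over the matching yields
\[\sum_{u \in T(B)} \|u - \gamma(u)\|_2 \;\le\; \sqrt{5}\sum_{u \in B}\|u - \gamma(u)\|_\infty \;=\; \sqrt{5}\,W_1(B,B'),\]
which multiplies the two factors $|\nabla f| + \sqrt{2/\pi}\,\|f\|_\infty/\sigma$ and $\sqrt{5}$ into exactly $\sqrt{5}|\nabla f| + \sqrt{10/\pi}\,\|f\|_\infty/\sigma$, matching the claimed constant.

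There is essentially no obstacle beyond the already-established Lemma~\ref{lem:l1surfbound_2D}, which absorbs all the real analytic work (the 1D Gaussian $L^1$ computation involving Erf and the resulting tensor-product bound in 2D). The only care required is bookkeeping with the change of coordinates $T$ and ensuring that using $\|f\|_\infty$ in place of the $\min$ does not cost extra powers of $\sqrt{2}$ --- because Lemma~\ref{lem:l1surfbound_2D} already produces an $\ell^2$-distance (not an $\ell^\infty$-distance) on the right-hand side, no additional $\sqrt{2}$ factor appears, which is precisely why the Gaussian case gives better constants than Theorem~\ref{thm:surface-stable-general}.
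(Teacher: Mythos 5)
Your proposal is correct and follows essentially the same route as the paper's proof: choose an optimal matching, apply the triangle inequality termwise, invoke Lemma~\ref{lem:l1surfbound_2D} with $\min\{f(u),f(\gamma(u))\}\le\|f\|_\infty$, and convert to $\ell^\infty$-distances via $\|T(\cdot)\|_2\le\sqrt{5}\|\cdot\|_\infty$. No gaps.
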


\begin{proof}
Since we assume $B$ and $B'$ consist of finitely many off-diagonal points, there exists a matching $\gamma$ that achieves the infimum in the Wasserstein distance. Then
\begin{align*}
&\| \rho_B - \rho_{B'} \|_1 = \left\|\sum_{u\in T(B)} f(u)g_u - \sum_{u\in T(B)} f(\gamma(u))g_{\gamma(u)}\right\|_1 \\
\leq& \sum_{u\in T(B)} \|f(u)g_u - f(\gamma(u))g_{\gamma(u)} \|_1 \\
\leq& \left(|\nabla f|+\sqrt{\frac{2}{\pi}}\frac{\|f\|_\infty}{\sigma}\right) \sum_{u\in T(B)} \|u-\gamma(u)\|_2 \quad\mbox{by Lemma~\ref{lem:l1surfbound_2D}, where }\min\{f(u),f(v)\}\le\|f\|_\infty \\
\leq& \left(\sqrt{5}|\nabla f|+\sqrt{\frac{10}{\pi}}\frac{\|f\|_\infty}{\sigma}\right)\sum_{u\in B} \|u-\gamma(u)\|_\infty \quad\mbox{since } \|T(\cdot)\|_2 \leq \sqrt{5} \|\cdot\|_\infty\\
=& \left(\sqrt{5}|\nabla f|+\sqrt{\frac{10}{\pi}}\frac{\|f\|_\infty}{\sigma}\right) W_1(B,B').
\end{align*}
\end{proof}

It follows that persistence images are also stable.

\begin{theorem}\label{thm:image-stable-Gaussians}
The persistence image $I(\rho_B)$ with Gaussian distributions is stable with respect to the 1-Wasserstein distance between diagrams.  More precisely,  
\begin{align*}
\| I(\rho_B) - I(\rho_{B'})\|_1 &\leq \left(\sqrt{5}|\nabla f|+\sqrt{\frac{10}{\pi}}\frac{\|f\|_\infty}{\sigma}\right) W_1(B,B')\\
\| I(\rho_B) - I(\rho_{B'})\|_2 &\leq \left(\sqrt{5}|\nabla f|+\sqrt{\frac{10}{\pi}}\frac{\|f\|_\infty}{\sigma}\right) W_1(B,B')\\
\| I(\rho_B) - I(\rho_{B'})\|_\infty &\leq \left(\sqrt{5}|\nabla f|+\sqrt{\frac{10}{\pi}}\frac{\|f\|_\infty}{\sigma}\right) W_1(B,B').
\end{align*}
\end{theorem}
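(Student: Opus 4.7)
The plan is to reduce all three bounds to the $L^1$ stability result for persistence surfaces (Theorem~\ref{thm:surface-stable-Gaussians}). The key observation is that, unlike the proof of Theorem~\ref{thm:image-stable-general} which passed through $\|\rho_B - \rho_{B'}\|_\infty$ and paid a factor of the pixel area (or $\sqrt{n}$ for the $L^2$ norm), here I would route everything through the $L^1$ norm on the surface, which is resolution-free.

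First I would establish the pointwise pixel estimate: for each pixel $p$,
\[ |I(\rho_B)_p - I(\rho_{B'})_p| \;=\; \Bigl|\iint_p (\rho_B - \rho_{B'})\, dy\, dx\Bigr| \;\le\; \iint_p |\rho_B - \rho_{B'}|\, dy\, dx. \]
Since the pixels are pairwise disjoint and contained in $\bR^2$, summing over all pixels gives
\[ \|I(\rho_B) - I(\rho_{B'})\|_1 \;\le\; \iint_{\bR^2} |\rho_B - \rho_{B'}|\, dy\, dx \;=\; \|\rho_B - \rho_{B'}\|_1. \]
Applying Theorem~\ref{thm:surface-stable-Gaussians} then yields the $L^1$ bound on the image with the stated constant.

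To get the $L^\infty$ and $L^2$ bounds I would invoke the elementary finite-dimensional norm inequalities $\|v\|_\infty \le \|v\|_1$ and $\|v\|_2 \le \|v\|_1$. The second follows because $\|v\|_2^2 = \sum v_i^2 \le \bigl(\sum |v_i|\bigr)^2 = \|v\|_1^2$. Applied to the vector $v = I(\rho_B) - I(\rho_{B'})$, these immediately give the remaining two inequalities with the same constant.

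This proof is essentially a short corollary of Theorem~\ref{thm:surface-stable-Gaussians}, so there is no real obstacle; the only subtle point is recognizing that it is advantageous to derive the $L^2$ and $L^\infty$ image bounds from the $L^1$ surface bound rather than from an $L^\infty$ surface bound, as this is precisely what avoids the $\sqrt{n}$ and $A$ factors that appeared in Theorem~\ref{thm:image-stable-general} and ensures the bounds are independent of the image resolution.
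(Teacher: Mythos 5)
Your proposal is correct and follows essentially the same route as the paper: bound the $L^1$ norm of the image difference by $\|\rho_B-\rho_{B'}\|_1$ via disjointness of the pixels, apply Theorem~\ref{thm:surface-stable-Gaussians}, and then deduce the $L^2$ and $L^\infty$ bounds from the elementary inequalities $\|\cdot\|_2\le\|\cdot\|_1$ and $\|\cdot\|_\infty\le\|\cdot\|_1$. Your remark about why routing through the $L^1$ surface bound avoids the resolution-dependent factors of Theorem~\ref{thm:image-stable-general} matches the paper's stated motivation exactly.
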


\begin{proof}
We have
\begin{align*}
\| I(\rho_B) - I(\rho_{B'})\|_1 &= \sum_p \Bigl| \iint_p \rho_B\; dydz- \iint_p \rho_{B'}\; dydx \Bigr| \le \iint_{\bR^2} |\rho_B-\rho_{B'}|\; dydz\\
&=\|\rho_B-\rho_{B'}\|_1 \leq \left(\sqrt{5}|\nabla f|+\sqrt{\frac{10}{\pi}}\frac{\|f\|_\infty}{\sigma}\right) W_1(B,B')
\end{align*}
by Theorem~\ref{thm:surface-stable-Gaussians}. The claim follows since $\| \cdot \|_2 \le \| \cdot \|_1$ and $\| \cdot \|_\infty \le \| \cdot \|_1$ for vectors in $\bR^n$.
\end{proof}

\section{Experiments}\label{sec:experiments}

In order to assess the addded value of our vector representation of PDs, we compare the performance of PDs, PLs, and PIs in \S\ref{sec:kmedoids} in a classification task for a synthetic data set consisting of point clouds sampled from six different topological spaces using $K$-medoids, which utilizes only the internal dissimilarities of each of the topological data descriptions to classify. We find that PIs produce consistently high classification accuracy and, furthermore, the computation time for PIs is significantly faster than computing bottleneck or Wasserstein distances between PDs. In \S \ref{subsec:param_selection}, we explore the impact that the choices of parameters determining our PIs have on classification accuracy.  We find that the accuracy is insensitive to the particular choices of PI resolution and distribution variance. In \S\ref{feature_selection:sec}, we combine PIs with a sparse support vector machine classifier to identify the most strongly differentiating pixels for classification; this is an example of a ML task which is facilitated by the fact that PIs are finite vectors. Finally, as a novel machine learning application, we illustrate the utility of PIs to infer dynamical parameter values in both continuous and discrete dynamical systems: a discrete time system called the linked twist map in \S \ref{sec:linkedtwist}, and a partial differential equation called the anisotropic Kuramoto-Sivashinsky equation in \S \ref{sec:aks}. 

\subsection{Comparison of PDs, PLs, and PIs using $K$-medoids Classification}
\label{sec:kmedoids}

Our synthetic dataset consists of six shape classes: a solid cube, a circle, a sphere, three clusters, three clusters within three clusters, and a torus. Given a level of Gaussian random noise, we produce 25 point clouds of 500 randomly sampled noisy points from each of the six shapes; giving 150 point clouds in total. We then compute the $H_0$ and $H_1$ PDs for the Vietoris--Rips filtration (\S\ref{PD_Data}) built from each point cloud which have been endowed with the ambient Euclidean metric on $\bR^3.$

Our goal is to compare various methods for transforming PDs into distance matrices to be used to establish proximity of topological features extracted from data.  We create $3^2 \cdot 2^2=36$ distance matrices of size $150\times150$, using three choices of representation (PDs, PLs, PIs), three choices of metric ($L^1$, $L^2$, $L^\infty$)\footnote{The $L^1$, $L^2$, $L^\infty$ distances on PDs are more commonly known as the 1-Wasserstein, 2-Wasserstein, and bottleneck distances.}, two choices of Gaussian noise ($\eta=0.05$, $0.1$), and two homological dimensions ($H_0$, $H_1$). For example, the PD, $H_1$, $L^2$, $\eta=0.1$,  distance matrix contains the 2-Wasserstein distances between the $H_1$ PDs for the random point clouds with noise level $0.1$. By contrast, the PI, $H_1$, $L^2$, $\eta=0.1$ distance matrix contains all pairwise $L^2$ distances between the PIs\footnote{For PIs in this experiment, we use variance $\sigma=0.1$, resolution $20\times20$, and the weighting function defined in \S\ref{sec:persistenceimages}.} produced from the $H_1$ PDs with noise level $0.1$.

We first compare these distance matrices based on how well they classify the random point clouds into shape classes via $K$-medoids clustering \citep{kaufman1987clustering,park2009simple}. $K$-medoids produces a partition of a metric space into $K$ clusters by choosing $K$ points from the dataset called \emph{medoids} and assigning each metric space point to its closest medoid. The \emph{score} of such a clustering is the sum of the distances from each point to its closest medoid. The desired output of $K$-medoids is the clustering with the minimal clustering score. Unfortunately, an exhaustive search for the global minimum is often prohibitively expensive. A typical approach to search for this global minimum is to choose a large selection of $K$ random initial medoids, improve each selection of medoids iteratively in rounds until the clustering score stabilizes and then return the identified final clustering with the lowest score for each initialization. In our experiments, we choose 1,000 random initial selections of $K=6$ medoids (as there are six shape classes) for each distance matrix, improve each selection of medoids using the Voronoi iteration method \citep{park2009simple}, and return the clustering with the lowest classification score. To each $K$-medoids clustering we assign an accuracy which is equal to the percentage of random point clouds identifed with a medoid of the same shape class. In Table~\ref{table:Kmedoids}, we report the classification accuracy of the $K$-medoids clustering with the lowest clustering score, for each distance matrix. 

Our second criterion for comparing methods to produce distance matrices is computational efficiency. In Table~\ref{table:Kmedoids}, we report the time required to produce each distance matrix, starting with 150 precomputed PDs as input. In the case of PLs and PIs, this time includes the intermediate step of transforming each PD into the alternate representation, as well as computing the pairwise distance matrix. All timings are computed on a laptop with a 1.3 GHz Intel Core i5 processor and 4 GB of memory. We compute bottleneck, 1-Wasserstein, and 2-Wasserstein distance matrices using the software of \citet{kerbergeometry}. For PL computations, we use the Persistence Landscapes Toolbox by \citet{bubenik2014persistence}. Our MATLAB code for producing PIs is publically available at \url{https://github.com/CSU-TDA/PersistenceImages}.

\begin{table}[h]
\caption{Comparing classification accuracy and times of PDs, PLs, and PIs. The timings contain the computation time in seconds for producing a $150\times150$ distance matrix from 150 precomputed PDs. In the case of PLs and PIs, this requires first transforming each PD into its alternate representation and then computing a distance matrix. We consider 36 distinct distance matrices: three representations (PDs, PLs, PIs), two homological dimensions ($H_0$, $H_1$), three choices of metric ($L^1$, $L^2$, $L^\infty$), and two levels of Gaussian noise ($\eta=0.05$, $0.1$).}
\label{table:Kmedoids}
\renewcommand{\arraystretch}{1.3}
\captionsetup{justification=centering}
\centering
\scalebox{0.95}{\centering
\begin{tabular}{|l|c|c|c|c|}   
\hline   
\bfseries Distance Matrix& \shortstack{Accuracy \\ (Noise 0.05)} & \shortstack{Time \\ (Noise 0.05)} & \shortstack{Accuracy \\ (Noise 0.1)} & \shortstack{Time \\ (Noise 0.1)} \\ \hline \hline
PD, $H_0$, $L^1$ & 96.0\% & 37346s & 96.0\% & 42613s \\ \hline
PD, $H_0$, $L^2$ & 91.3\% & 24656s & 91.3\% & 25138s \\ \hline
PD, $H_0$, $L^{\infty}$ & 60.7\% & 1133s & 63.3\% & 1149s \\ \hline
PD, $H_1$, $L^1$ & 100\% & 657s & 96.0\% & 703s \\ \hline
PD, $H_1$, $L^2$ & 100\% & 984s & 97.3\% & 1042s \\ \hline
PD, $H_1$, $L^{\infty}$ & 81.3\% & 527s & 66.7\% & 564s \\ \hline \hline
PL, $H_0$, $L^1$ & 92.7\% & 29s & 96.7\% & 33s \\ \hline
PL, $H_0$, $L^2$ & 77.3\% & 29s & 82.0\% & 34s \\ \hline
PL, $H_0$, $L^\infty$ & 60.7\% & 2s & 63.3\% & 2s \\ \hline
PL, $H_1$, $L^1$ & 83.3\% & 36s & 80.7\% & 48s \\ \hline
PL, $H_1$, $L^2$ & 83.3\% & 50s & 66.7\% & 69s \\ \hline
PL, $H_1$, $L^\infty$ & 74.7\% & 8s & 66.7\% & 9s \\ \hline \hline
PI, $H_0$, $L^1$ & 93.3\% & 9s & 95.3\% & 9s \\ \hline
PI, $H_0$, $L^2$ & 92.7\% & 9s & 95.3\% & 9s \\ \hline
PI, $H_0$, $L^\infty$ & 94.0\% & 9s & 96.0\% & $9$s \\ \hline
PI, $H_1$, $L^1$ & 100\% & 17s & 95.3\% & 18s \\ \hline
PI, $H_1$, $L^2$ & 100\% & 17s & 96.0\% & 18s \\ \hline
PI, $H_1$, $L^\infty$ & 100\% & 17s & 96.0\% & 18s \\ \hline
\end{tabular}
}
\label{bigtable}
\end{table}

We see in Table~\ref{table:Kmedoids} that PI distance matrices have higher classification accuracy than nearly every PL distance matrix, and higher classification accuracy than PDs in half of the trials. 
Furthermore, the computation times for PI distance matrices are significantly lower than the time required to produce distance matrices from PDs using the bottleneck or $p$-Wasserstein metrics. In this experiment, persistent images provide a representation of persistent diagrams which is both useful for the classification task and also computationally efficient. 


\subsection{Effect of PI Parameter Choice}
\label{subsec:param_selection}

In any system that relies on multiple parameters, it is important to understand the effect of parameter values on the system. As such, we complete a search of the parameter space used to generate PIs on the shape dataset described in \S\ref{sec:kmedoids} and measure $K$-medoids classification accuracy as a function of the parameters. We explore 20 different resolutions (ranging from $5 \times 5$ to $ 100 \times 100$ in increments of 5), use a Gaussian function with 20 different choices of variance (ranging from 0.01 to 0.2 in increments of 0.01), and the weighting function described in \S\ref{sec:persistenceimages}.
\begin{figure}[h]
\captionsetup[subfigure]{justification=centering}
\centering
\includegraphics[width=.75\textwidth]{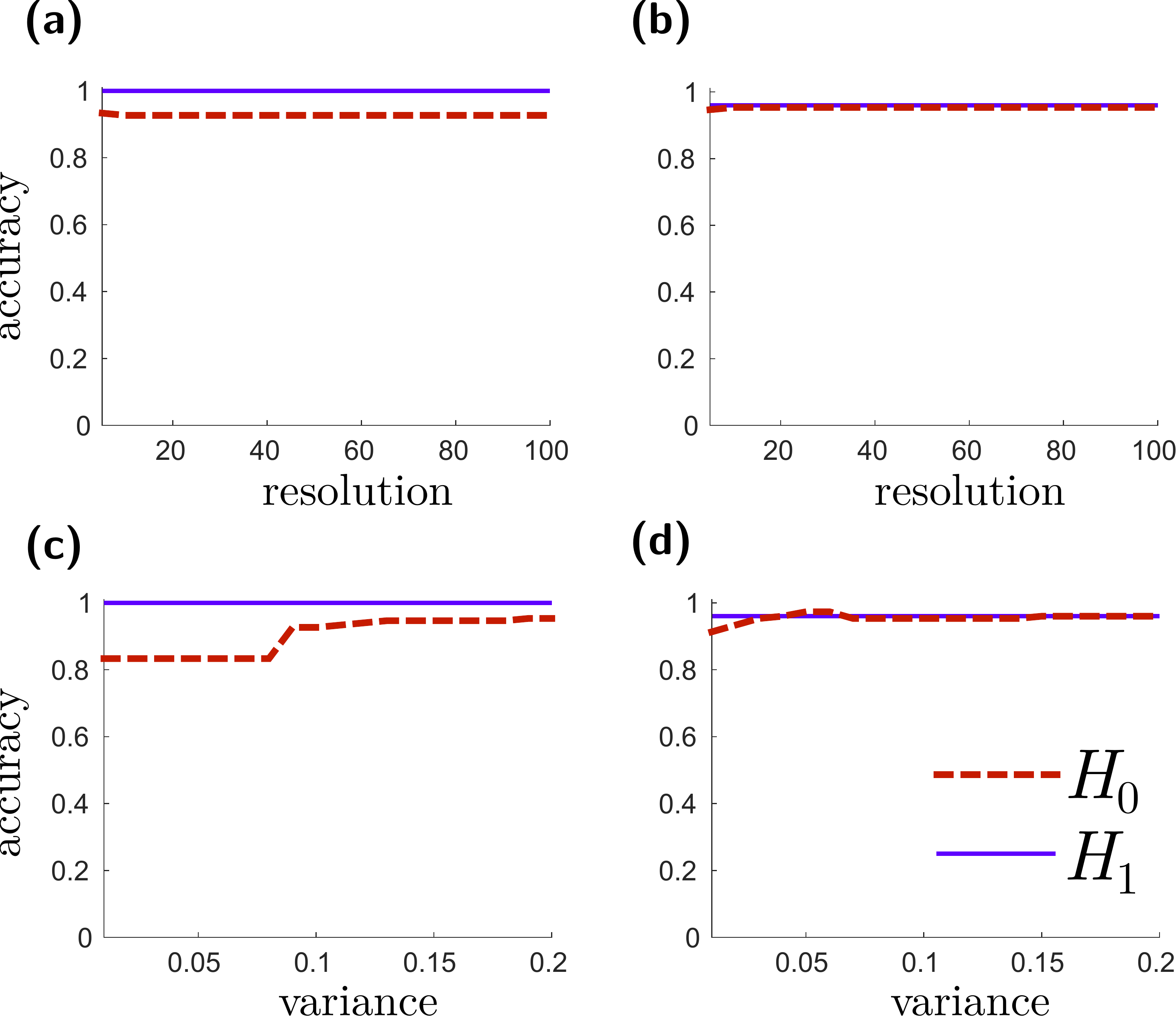}
\caption{$K$-medoids classification accuracy as a function of resolution and variance for the dataset of six shape classes. First column: noise level $\eta = 0.05$. Second column: noise level $\eta = 0.1$. First row: fixed variance 0.1 with resolutions ranging from $5\times 5$ to $100\times 100$ in increments of 5. Second row: fixed resolution $20\times 20$ with variances ranging from 0.01 to 0.2 in increments of 0.01. 
}
\label{fig:Clustering_kmed}
\end{figure}
For each set of parameters, we compute the classification accuracy of the $K$-medoids clustering with the minimum clustering score on the two sets of noise levels for the homology dimensions $H_0$ and $H_1$. We observe that the classification accuracy is insensitive to the choice of resolution and variance. 
The plots in Figure~\ref{fig:Clustering_kmed} are characteristic of the 2-dimensional accuracy surface over all combinations of parameters in the ranges of variances and resolutions we tested. In an application to archeology, \citet{zeppelzauer2016topological} find a similar robustness of PIs to the choices of resolution and variance.

\subsection{Differentiating Homological Features by Sparse Support Vector Machine}
\label{feature_selection:sec}

The 1-norm regularized \textit{linear} support vector machine (SVM), a.k.a.\ sparse SVM (SSVM) classifies data by generating a separating hyperplane that depends on very few input space features \citep{Bradley1998,Zhu2003,Zhang2010b}. Such a model can be used for reducing data dimension or selecting discriminatory features. Note that linear SSVM feature selection is implemented on vectors and, therefore, can be used on our PIs to select discriminatory pixels during classification. 
Other PD representations in the literature \citep{reininghaus2015stable,pachauri2011topology} are designed to use kernel ML methods, such as \textit{kernel} (nonlinear) SVMs. However, constructing kernel SVM classifiers using the 1-norm results in minimizing the number of kernel functions, not the number of features in the input space (i.e. pixels in our application) \citep{Fung2004}. Hence, for the purpose of feature selection or, more precisely, PI pixel selection, we employ the linear SSVM.

We adopt the one-against-all (OAA) SSVM on the sets of $H_0$ and $H_1$ PIs from the six class shape data. In a one-against-all SSVM, there is one binary SSVM for each class to separate members of that class from members of all other classes. The PIs were generated using resolution $20 \times 20$, variance $0.0001$, and noise level $0.05$. Note that because of the resolution parameter choice of $20 \times 20$, each PI is a $400$-dimensional vector, and the selected features will be a subset of indices corresponding to pixels within the PI. Using 5-fold cross-validated SSVM resulted in $100\%$ accuracy comparing six sparse models with indications of the discriminatory features. Feature selection is performed by retaining the features (again, in this application, pixels) with non-zero SSVM weights, determined by magnitude comparison using weight ratios; for details see \citet{chep}. Figure~\ref{fig:FeatureSelection1} provides two examples, indicating the pixels of $H_1$ PIs that discriminate circles and tori from the other classes in the synthetic data set.

\begin{figure}[H]
\centering
\includegraphics[width=0.6\textwidth]{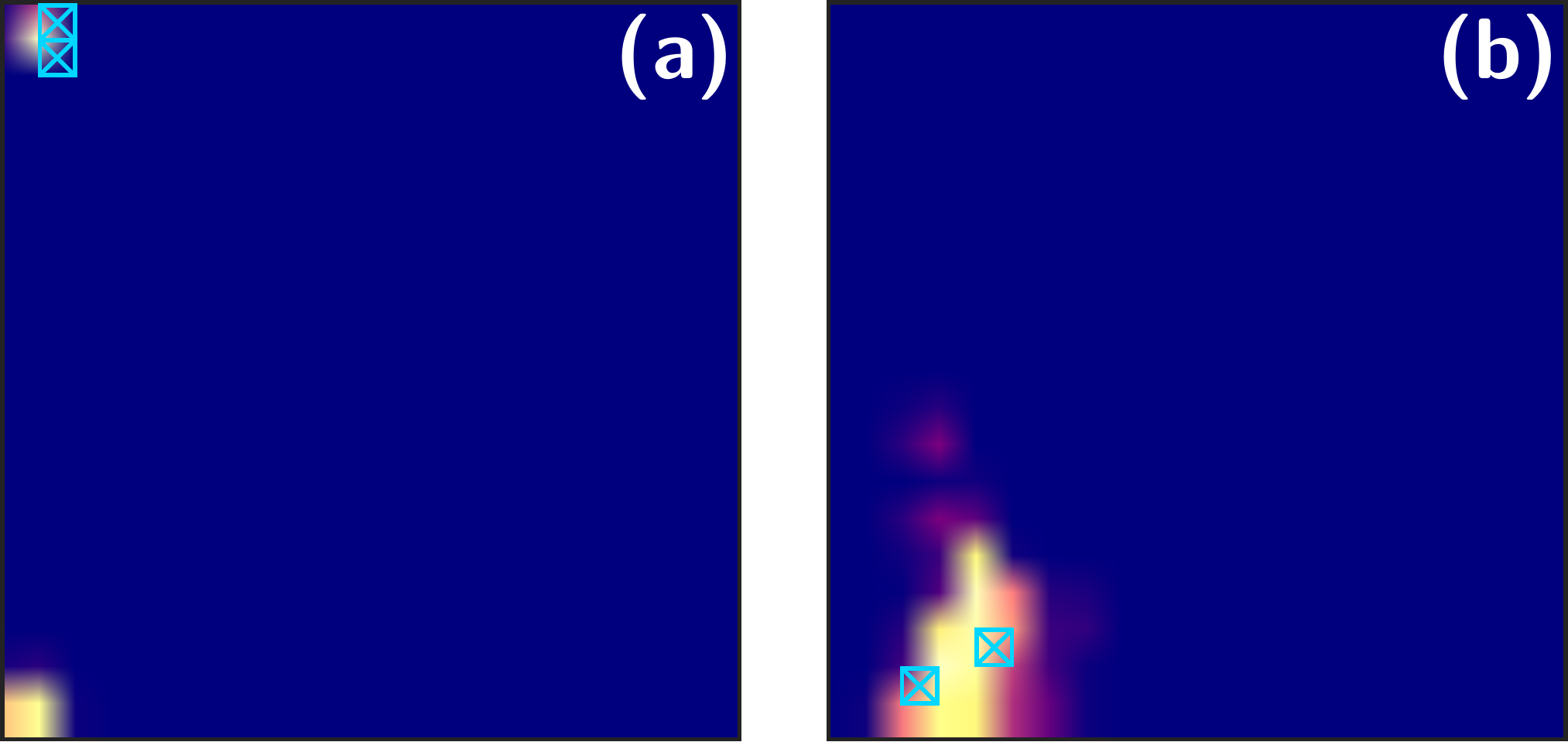}
\caption{SSVM-based feature (pixel) selection for $H_1$ PIs from two classes of the synthetic data. Selected pixels are marked by blue crosses. (a) A noisy circle with the two selected pixels (indices 21 and 22 out of 400). (b) A noisy torus with the two selected pixels (indices 59 and 98 out of 400). The PI parameters used are resolution $20 \times 20$ and variance $10^{-4}$, for noise level $0.05$.}
\label{fig:FeatureSelection1}
\end{figure}

Feature selection produces highly interpretable results. The discriminatory pixels in the $H_1$ PIs that separate circles from the other classes correspond to the region where highly persistent $H_1$ topological features exist across all samples of a noisy circle (highlighted in Figure~\ref{fig:FeatureSelection1}a). Alternatively, the discriminatory pixels in $H_1$ PIs that separate tori from the other classes correspond to points of short to moderate persistence (see Figure~\ref{fig:FeatureSelection1}b). In this way, Figure~\ref{fig:FeatureSelection1}b reiterates an observation of \citet{bendich2014persistent} that points of short to moderate persistence can contain important discriminatory information. Similar conclusions can be drawn from the discriminatory pixels of others classes (Appendix~\ref{app:featureselection}). Our classification accuracy of 100\% is obtained using only those pixels selected by SSVM (a cumulative set of only 10 distinct pixels).

\subsection{Application: Determination of Dynamical System Parameters}

Models of dynamic physical phenomenon rarely agree perfectly with the reality they represent.  This is often due to the presence of poorly-resolved (or poorly-understood) processes which are parameterized rather than treated explicitly.  As such, determination of the influence of a model parameter --- which may itself be an incompletely-described conglomeration of several physical parameters --- on model dynamics is a mainstay of dynamical system analysis.  In the case of fitting a dynamic model to data, i.e. explicit determination of optimal model parameters, a variety of techniques exist for searching through parameter space, which often necessitate costly simulations.  Furthermore, such approaches struggle when applied to models exhibiting sensitivity to initial conditions.  We recast this problem as a machine-learning exercise based on the hypotheses that model parameters will be reflected directly in dynamic data in a way made accessible by persistent homology.

\subsubsection{A discrete dynamical model} \label{sec:linkedtwist}
We approach a classification problem with data arising from the linked twist map, a discrete dynamical system modeling fluid flow. \citet{DNAHertzsch} use the linked twist map to model flows in DNA microarrays with a particular interest in understanding turbulent mixing. This demonstrates a primary mechanism giving rise to chaotic advection. The linked twist map is a Poincar\'e section of \textit{eggbeater-type flow} \citep{DNAHertzsch} in continuous dynamical systems.  The Poincar\'e section captures the behavior of the flow by viewing a particle's location at discrete time intervals.  The linked twist map is given by the discrete dynamical system
\begin{align*}
x_{n+1}&=x_n+r y_n(1-y_n) \mod 1 \\
y_{n+1}&=y_n+r x_n(1-x_n)  \mod 1,
\end{align*}
where $r$ is a positive parameter.  For some values of $r$, the orbits $\{ (x_n, y_n) \; : \; n = 0,
\ldots, \infty \}$ are dense in the domain.  However, for other parameter values, voids form.  In either case, the truncated orbits $\{ (x_n, y_n) \; : \; n = 0, \ldots, N \in \mathbb{N} \}$  exhibit complex structure. 

For this experiment, we choose a set of parameter values, $r=$ 2.5, 3.5, 4.0, 4.1 and 4.3, which produce a variety of orbit patterns. For each parameter value, 50 randomly-chosen initial conditions are selected, and 1000 iterations of the linked twist map are used to generate point clouds in $\mathbb{R}^2$.  Figure \ref{fig:LTM} shows examples of typical orbits generated for each parameter value.  The goal is to classify the trials by parameter value using PIs to capitalize on distinguishing topological features of the data. We use resolution $20\times 20 $ and a Gaussian with variance $\sigma=0.005$ to generate the PIs. These parameters were chosen after a preliminary parameter search and classification effort. Similar results hold for a range of PI parameter values. 

\begin{figure} [h]
\captionsetup[subfigure]{justification=centering}
\centering
\includegraphics[width=\textwidth]{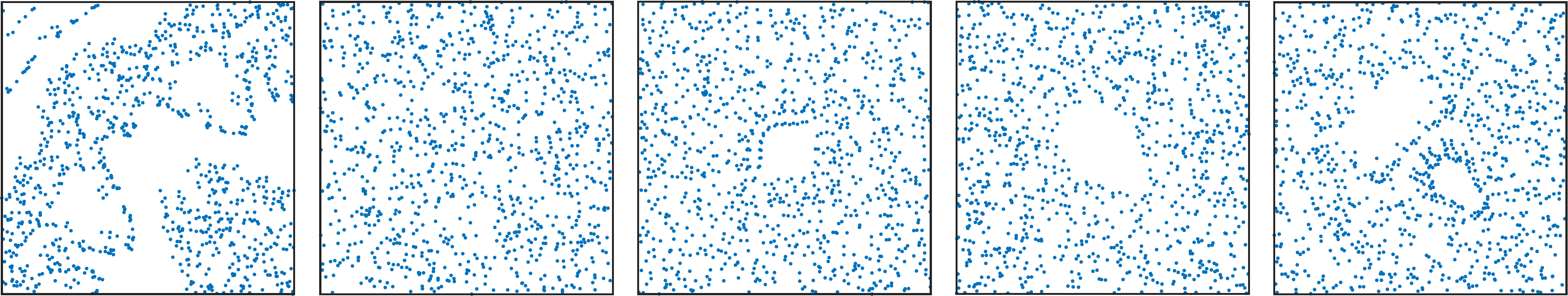}
\caption{Examples of the first 1000 iterations, $\{ (x_n, y_n) \; : \; n = 0, \ldots, 1000 \}$,  of the linked twist map with parameter values $r=$ 2, 3.5, 4.0, 4.1 and 4.3, respectively.}
\label{fig:LTM}
\end{figure}

\begin{figure}[h!]
\centering
\includegraphics[width=\textwidth]{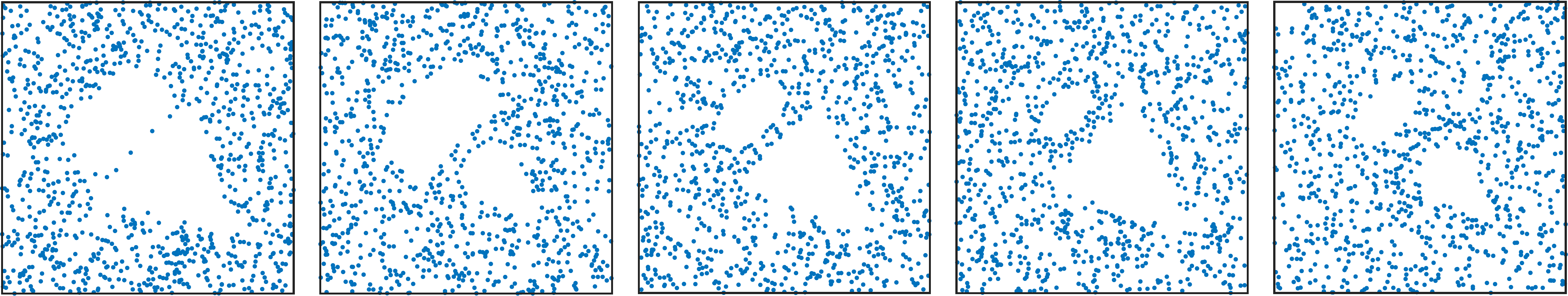}
\caption{Truncated orbits, $\{ (x_n, y_n) \; : \; n = 0, \ldots, 1000 \}$, of the linked twist map with fixed $r=4.3$ for different initial conditions $(x_0,y_0)$.}
\label{fig:LTM4-3var}
\end{figure}
For a fixed $r$ parameter value and a large number of points (many thousands), the patterns in the distributions of iterates show only small visible variations for different choices of the initial condition $(x_0,y_0)$. However, with few points, such as in Figure \ref{fig:LTM4-3var}, there are more significant variations in the patterns for different choices of initial conditions, making classification more difficult. 


We perform classification and cross-validation with a discriminant subspace ensemble.  This ML algorithm trains many ``weak'' learners on randomly chosen subspaces of the data (of a fixed dimension), and classifies and  assigns a score to each point based on the current subspace. The final classification arises from an average of the scores of each data point over all learners \citep{ho1998random}. We perform 10 trials and average the classification accuracies.  For the concatenated $H_0$ and $H_1$ PIs, this method achieves a classification accuracy of 82.5\%; compared to 49.8\% when using only $H_0$ PIs, and 65.7\% using $H_1$ PIs.  This experiment highlights two strengths of PIs: they offer flexibility in choosing a ML algorithm that is well suited to the data under consideration, and homological information from multiple dimensions may be leveraged simultaneously for greater discriminatory power.
 

This application is a brief example of the utility of PIs in classification of data from dynamical systems and modeling real-world phenomena, which provides a promising direction for further applications of PIs.

\subsubsection{A partial differential equation} \label{sec:aks}

The Kuramoto-Sivashinsky (KS) equation is a partial differential equation for a function $u(x,y,t)$ of spatial variables $x,y$ and time $t$ that has been independently derived in a variety of problems involving pattern formation in extended systems driven far from equilibrium.  Applications involving surface dynamics include surface nanopatterning by ion-beam erosion \citep{cb95,Motta12}, epitaxial growth \citep{villain91,wolf91,rost95}, and solidification from a melt \citep{golovin98}.  In these applications, the nonlinear term in the KS equation may be anisotropic, resulting in the anisotropic Kuramoto-Sivashinsky (aKS) equation
\begin{equation}
\label{aKS}
\frac{\partial}{\partial t} u = - \nabla^2 u - \nabla^2 \nabla^2 u + r \left(\frac{\partial }{\partial x} u \right)^2 + \left( \frac{\partial}{\partial y} u \right)^2,
\end{equation}
where $\nabla^2 = \frac{\partial^2}{ \partial x^2}  + \frac{\partial^2}{\partial y^2} $, and the real parameter $r$ controls the degree of anisotropy.  At a fixed time $t^*$, $u(x,y,t^*)$ is a patterned surface (periodic in both $x$ and $y$) defined over the $(x,y)$-plane. Visibly, the anisotropy appears as a slight tendency for the pattern to be elongated in the vertical or horizontal direction.      

\begin{figure}[h!]
\centering
\includegraphics[width=.9\textwidth]{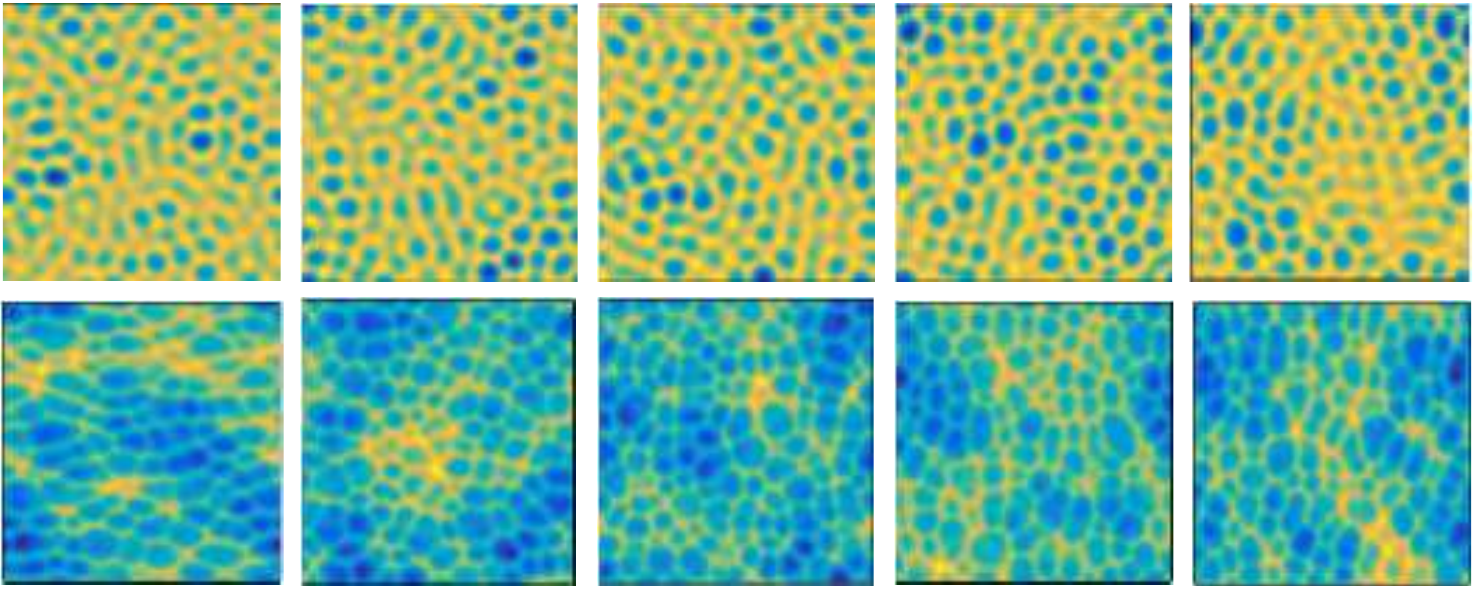}
\caption{Plots of height-variance-normalized surfaces $u(x,y,\cdot)$ resulting from numerical simulations of the aKS equation \eqref{aKS}.  Each column represents a different parameter value: (from left) $r=1$, 1.25, 1.5, 1.75 and 2. Each row represents a different time: $t=3$ (top) and $t=5$ (bottom). By $t=5$ any anisotropic elongation of the surface pattern has visibly stabilized.}
\label{fig:aks-ex}
\end{figure}

Numerical simulations of the aKS equation for a range of parameter values (columns) and simulation times (rows) are shown in Figure~\ref{fig:aks-ex}.  For all simulations, the initial conditions were low-amplitude white noise.  We employed a Fourier spectral method with periodic boundary conditions on a $512 \times 512$ spatial grid, with a fourth-order exponential time differencing Runge-Kutta method for the time stepping. Five values for the parameter $r$ were chosen, namely $r=1$, 1.25, 1.5, 1.75 and 2, and thirty trials were performed for each parameter value. Figure~\ref{fig:quiz} shows the similarity between surfaces associated to two parameter values $r=1.75$ and $r=2$ at an early time.

\begin{figure}[h!]
\captionsetup{singlelinecheck=off}
\centering
\includegraphics[width=.9\textwidth]{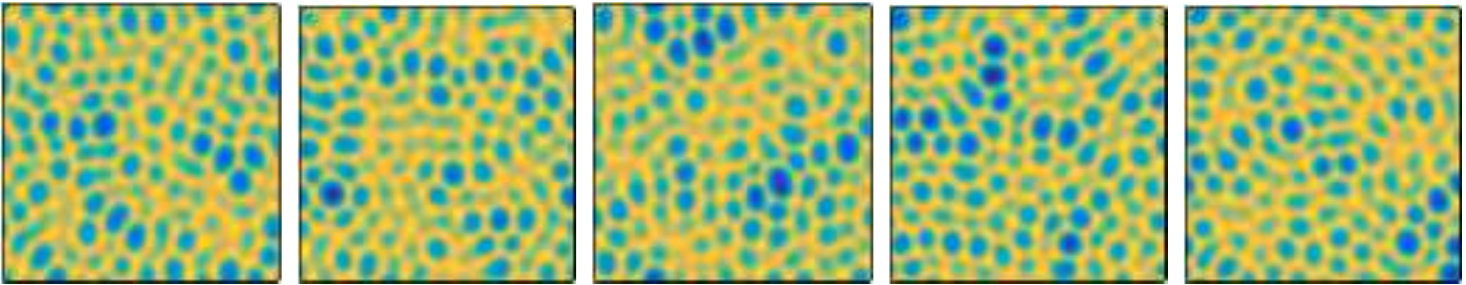}
\caption[NecessaryName-DontDelete]{To illustrate the difficulty of our classification task, consider five instances of surfaces $u(x,y,3)$ for $r=1.75$ or $r=2$, plotted on the same color axis. These surfaces are found by numerical integration of Equation~\eqref{aKS}, starting from random initial conditions. Can you group the images by eye?\\
\rotatebox{180}{\textbf{Answer:} (from left)  $r=1.75, 2, 1.75, 2, 2$.}}
\label{fig:quiz}
\end{figure}

We aim to identify the anisotropy parameter for each simulation using snapshots of surfaces $u(x,y,\cdot)$ as they evolve in time.  Inference of the parameter using the surface alone proves difficult for several reasons.  First, Equation~\eqref{aKS} exhibits sensitivity to initial conditions: initially nearby solutions diverge quickly.  Second, although the surface $u(x,y,t^*)$ at a fixed time is an approximation due to the finite discretization of its domain, the spatial resolution is still very large: in fact, these surfaces may be thought of as points in $\mathbb{R}^{266144}$.  We were unable to perform standard classification techniques in this space. It was therefore necessary to perform some sort of dimension reduction.  One such method is to simply `resize' the surface by coarsening the discretization of the spatial domain after computing the simulation at a high resolution by replacing a block of grid elements with their average surface height.  The surfaces were resized in this way to a resolution of $10 \times 10$ and a subspace discriminant ensemble was used to perform classification.  Unsurprisingly, this method performs very poorly at all times (first row of Table~\ref{aksresults}).

The anisotropy parameter also influences the mean and amplitude of the surface pattern.  We eliminate differences in the mean by mean-centering each surface after the simulation. To assess the impact of the variance of surface height on our task, classification was performed using a normal distribution-based classifier built on the variances of the surface heights. In this classifier, a normal distribution was fit to a training set of 2/3 of the variances for each parameter value, and the testing data was classified based on a $z$-test for each of the different models. That is, a $p$-value for each new variance was computed for membership to the five normal distributions (corresponding to the five parameter choices of $r$), and the surface was classified based on the model yielding the highest $p$-value. After the pattern has more fully emerged (by, say, time $t=5$) this method of classification yields 75$\% $ accuracy\footnote{Accuracy reported is averaged over 100 different training and testing partitions.}, as shown in Table~\ref{aksresults}. However, early on in the formation of the pattern, this classifier performs very poorly because height variance is not yet a discriminating feature. Figure \ref{fig:variance_dist} shows the normal distribution fit to the variance of the surfaces for each parameter value at times $t=3$ and 5, and illustrates why the variance of surface height is informative only after a surface is allowed to evolve for a sufficiently long time. 

\begin{figure}[h!]
\begin{centering}
\includegraphics[width=0.9\textwidth]{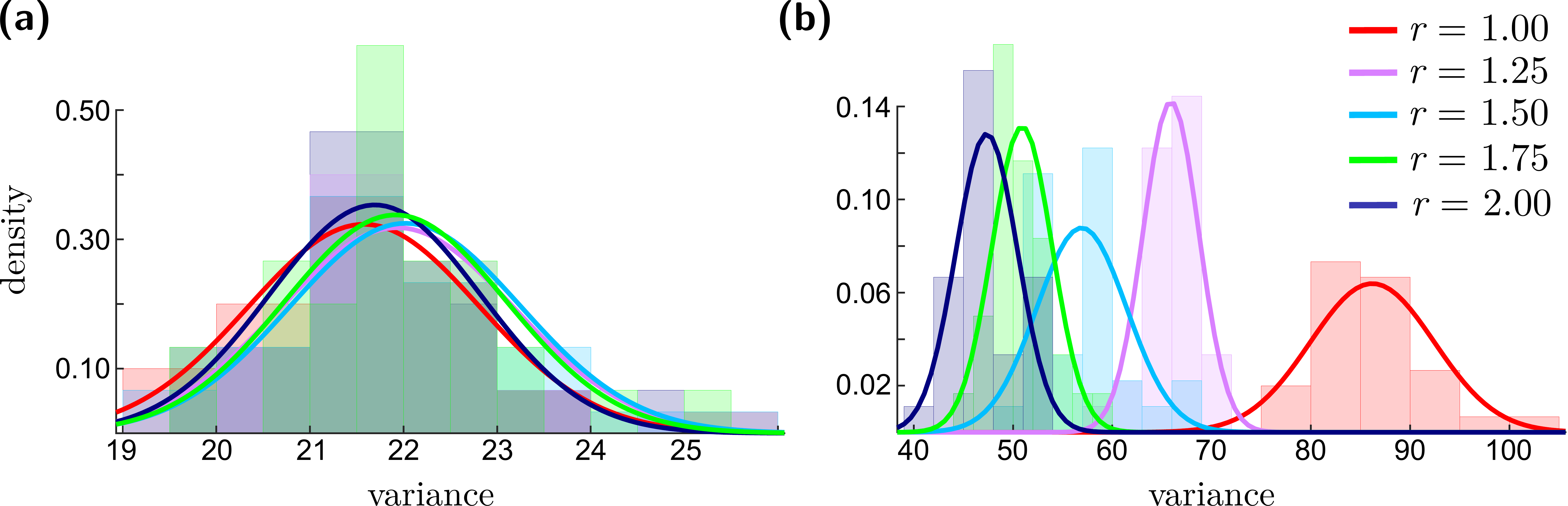}
\caption{Histograms of the variances of surface heights for each parameter value, and the normal distribution fit to each histogram, for times (a) $t=3$ and (b) $t=5$.}
\label{fig:variance_dist}
\end{centering}
\end{figure}

Variance of a surface is reflected in its sublevel set filtration (see \S\ref{app:PD_Functions} for more details) PD.  Yet, the PD and the subsequent PI contain additional topological structure, which may reveal other influences of the anisotropy parameter on the evolution of the surface.  Persistence diagrams were computed using the sublevel set filtration, and PIs were generated with resolution  $10 \times 10$ and a Gaussian with variance $\sigma=0.01$.  We think of our pipeline to a PI as a dimensionality reduction in this case, taking a surface which in actuality is a very high-dimensional point and producing a much lower dimensional one that retains meaningful characteristics of the original surface.  

We again use a subspace discriminant ensemble to classify PIs by parameter.  Table \ref{aksresults} compares these results to the same technique applied to low dimensional approximations of the raw surfaces and the normal distribution-based classifier built from surface variance alone.  At each time in the system evolution, the best classification accuracy results from using PIs, improving accuracies over using either low resolution approximations of the surfaces or variance of surface height alone by at least 20$\%$, including at early times in the evolution of the surface when pattern amplitudes are not visibly differentiated (see Figure~\ref{fig:quiz}).  We postulate that PIs capture more subtle topological information that is useful for identifying the parameter used to generate each surface. 

As we observed in \S\ref{sec:linkedtwist}, concatenating $H_0$ and $H_1$ PIs can notably improve the classification accuracy over either feature vector individually.  We again note that classification accuracy appears insensitive to the PI parameters.  For  example, when the variance of the Guassians used to generate the PIs was varied from 0.0001 to 0.1, the classification accuracy of the H$_0$ PIs, changed by less than one percentage point. The classification accuracy for H$_1$ fluctuated in a range of approximately five points. For a fixed variance, when the resolution of the image was varied from 5 to 20, the H$_0$ accuracy varied by little more than three points until the accuracy dropped by six points for a resolution of 25. 

PIs performed remarkably well in this classification task, allowing one to capitalize on subtle structural differences in the patterns and significantly reduce the dimension of the data for classification. There is more to be explored in the realm of pattern formation and persistence that is outside the scope of this paper.

\begin{table}[t!]
\centering
\caption{Classification accuracies at different times of the aKS solution, using different classification approaches. Classification of times $t=$ 15 and 20 result in accuracies similar to $t=$10.}
\begin{tabular}{|c|c|c|c|}   
\hline   
\bfseries{Classification Approach}& \shortstack{Time \\t=3} & \shortstack{Time \\t=5} & \shortstack{Time \\t=10} \\
\hline \hline
Subspace Discriminant Ensemble, Resized Surfaces & 26.0 \% & 19.3\% & 19.3 \% \\
\hline
 Variance Normal Distribution Classifier & 20.74\% & 75.2\%  & 77.62 \% \\
\hline
\hline
Subspace Discriminant Ensemble, $H_0$ PIs & 58.3 \% & 96.0 \% & 94.7 \% \\
\hline
Subspace Discriminant Ensemble, $H_1$ PIs & 67.7 \% & 87.3 \% & 93.3\% \\
\hline
Subspace Discriminant Ensemble, $H_0$ and $H_1$ PIs & 72.7 \% & 95.3 \%  &  97.3 \% \\
\hline

\end{tabular}
\label{aksresults}
\end{table}


\section{Conclusion}
PIs offer a stable representation of the topological characteristics captured by a PD. Through this vectorization, we open the door to a myriad of ML tools. This serves as a vital bridge between the fields of ML and topological data analysis and enables one to capitalize on topological structure (even in multiple homological dimensions) in the classification of data. 

We have shown PIs yield improved classification accuracy over PLs and PDs on sampled data of common topological spaces at multiple noise levels using $K$-medoids. Additionally, computing distances between PIs requires significantly less computation time compared to computing distances between PDs, and comparable computation times with PLs. Through PIs, we have gained access to a wide variety of ML tools, such as SSVM which can be used for feature selection. Features (pixels) selected as discriminatory in a PI are interpretable because they correspond to regions of a PD. We have explored datasets derived from dynamical systems and illustrated that topological information of solutions can be used for inference of parameters since PIs encapsulate this information in a form amenable to ML tools, resulting in high accuracy rates for data that is difficult to classify. 

The classification accuracy is robust to the choice of parameters for building PIs, providing evidence that it is not necessary to perform large-scale parameter searches to achieve reasonable classification accuracy. This indicates the utility of PIs even when there is not prior knowledge of the underlying data (i.e.\ high noise level, expected holes, etc.). The flexibility of PIs allows for customization tailored to a wide variety of real-world data sets.

\vskip .5cm
\noindent \textbf{Acknowledgments:} We would like to acknowledge the research group of Paul Bendich at Duke University for allowing us access to a persistent homology package which greatly reduced computational time and made analysis of large point clouds feasible. This code can be accessed via GitLab after submitting a request to Paul Bendich. This research is partially supported by the National Science Foundation under Grants No. DMS-1228308, DMS-1322508, NSF DMS-1115668, NSF DMS-1412674, and DMR-1305449 as well as the DOD-USAF under Award Number FA9550-12-1-0408. 

\bibliographystyle{plainnat}
\bibliography{bibliography}

\newpage
\appendix

\section{Homology and Data} \label{app:homology}

Homology is an invariant that characterizes the topological properties of a topological space $X$. In particular, homology measures the number of connected components, loops, trapped volumes, and so on of a topological space, and can be used to distinguish distinct spaces from one another. More explicitly, the $k$-dimensional holes of a space generate a homology group, $H_k(X)$. The rank of this group is referred to as the \emph{$k$-th Betti number}, $\beta_k$, and counts the number of $k$-dimensional holes of $X$. For a comprehensive study of homology, see \citet{hatcher2002algebraic}.

\subsection{Simplicial Complexes and Homology}

\indent Simplicial complexes are one way to define topological spaces combinatorially. More precisely, a \emph{simplicial complex} $S$ consists of vertices (0-simplices), edges (1-simplices), triangles (2-simplices), tetrahedra (3-simplices), and higher-dimensional $k$-simplices (containing $k+1$ vertices), such that
\begin{itemize}
\item if $\sigma$ is a simplex in $S$ then $S$ contains all lower-dimensional simplices of $\sigma$, and
\item the non-empty intersection of any two simplices in $S$ is a simplex in $S$.
\end{itemize}

The following setup is necessary for a rigorous definition of (simplicial) homology. To a simplicial complex, one can associate a chain complex of vector spaces over a field $\mathbb{F}$ (often a finite field $\mathbb{Z}/p\mathbb{Z}$ for $p$ a small prime), 
\[\cdots \rightarrow C_{k+1} \xrightarrow{\partial_{k+1}} C_k \xrightarrow{\partial_{k}} C_{k-1} \rightarrow \cdots.\]
\noindent Here, vector space $C_k$ consists of all $\mathbb{F}$-linear combinations of the $k$ simplices of $S$, and has as a basis the set of all $k$-simplices. The linear map $\partial_k: C_k \rightarrow C_{k-1}$, known as the \emph{boundary operator}, maps a $k$-simplex to its boundary, a sum of its $(k-1)$-faces. More formally, the boundary map acts on a $k$-simplex $[v_0,v_1,\ldots,v_k]$ by
\[\partial_k([v_0,v_1, \ldots, v_k]) = \sum_{i=0}^k (-1)^i [v_0, \ldots, \hat{v_i}, \ldots, v_k],\]
where $[v_0, \ldots, \hat{v_i}, \ldots, v_k]$ is the $(k-1)$-simplex obtained from $[v_0, \ldots, v_k]$ by removing vertex $v_i$. We define two subspaces of $C_k$: subspace $Z_k = \ker(\delta_k)$ is known as the \emph{$k$-cycles}, and subspace $B_k = \im(\delta_{k+1}) = \delta_{k+1}(C_{k+1})$ is known as the \emph{$k$-boundaries}. The boundary operator satisfies the property $\partial_k \circ \partial_{k+1}=0$, which implies the inclusion $B_k\subseteq Z_k$.

Homology seeks to uncover an equivalence class of cycles that enclose a $k$-dimensional hole---that is, cycles which are not also boundaries of $k$-simplices. To this end, the $k$-th order homology is defined as $H_k(S) = Z_k/B_k$, a quotient of vector spaces. The $k$-th Betti number $\beta_k = \dim(H_k(S))$ is the dimension of this vector space, and counts the number of independent holes of dimension $k$. More explicitly, $\beta_0$ counts the number of connected components, $\beta_1$ the number of loops, $\beta_2$ the number of trapped volumes, and so on. Betti numbers are a topological invariant, meaning that topologically equivalent spaces have the same Betti number.

\subsection{Persistence Diagrams from Point Cloud Data}\label{PD_Data}

One way to approximate the topological characteristics of a point cloud dataset is to build a simplicial complex on top of it. Though there are a variety of methods to do so, we restrict attention to the \emph{Vietoris--Rips simplicial complex} due to its computational tractability \citep{barcodes}. Given a data set $Y$ (equipped with a metric) and a scale parameter $\epsilon\ge0$, the Vietoris--Rips complex $S_\epsilon$ has $Y$ as its set of vertices and has a $k$-simplex for every collection of $k+1$ vertices whose pairwise distance is at most $\epsilon$. However, it is often not apparent how to choose scale $\epsilon$. Selecting $\epsilon$ too small results in a topological space with a large number of connected components, and selecting $\epsilon$ too large results in a topological space that is contractible (equivalent to a single point).

The idea of persistent homology is to compute homology at many scales and observe which topological features persist across those scales \citep{barcodes,carlsson2009topology,edelsbrunner2008persistent}. Indeed, if $\epsilon_1 \leq \epsilon_2 \leq \ldots \leq \epsilon_m$ is an increasing sequence of scales, then the corresponding Vietoris--Rips simplicial complexes form a filtered sequence $S_{\epsilon_1} \subseteq S_{\epsilon_2} \subseteq \ldots \subseteq S_{\epsilon_m}$. As $\epsilon$ varies, so does the homology of $S_\epsilon$, and for any homological dimension $k$ we get a sequence of linear maps $H_k(S_{\epsilon_1}) \to H_k(S_{\epsilon_2}) \to \ldots \to H_k(S_{\epsilon_m})$. Persistent homology tracks the homological features over a range of values of $\epsilon$. Those features which persist over a larger range are considered to be true topological characteristics, while short-lived features are often considered as noise.

For each choice of homological dimension $k$, the information measured by persistent homology can be presented as a \emph{persistence diagram} (PD), a multiset of points in the plane. Each point $(x,y)=(\epsilon,\epsilon')$ corresponds to a topological feature that appears (is `born') at scale parameter $\epsilon$ and which no longer remains (`dies') at scale $\epsilon'$. Since all topological features die after they are born, this is an embedding into the upper half plane, above the diagonal line $y=x$. Points near the diagonal are considered to be noise while those further from the diagonal represent more robust topological features.

\subsection{Persistence Diagrams from Functions}\label{app:PD_Functions}

Let $X$ be a topological space and let $f\colon X\to\bR$ be a real-valued function. One way to understand the behavior of map $f$ is to understand the topology of its sublevel sets $f^{-1}((-\infty,\epsilon])$, where $\epsilon\in\bR$. Indeed, given $\epsilon_1 \leq \epsilon_2 \leq \ldots \leq \epsilon_m$, one can study map $f$ using the persistent homology of the resulting filtration of topological spaces, known as the sublevel set filtration:
\[ f^{-1}((-\infty,\epsilon_1]) \subseteq f^{-1}((-\infty,\epsilon_2]) \subseteq \ldots \subseteq f^{-1}((-\infty,\epsilon_m]). \]
If $X$ is a simplicial complex, then one can produce an increasing sequence of simplicial complexes using a modification of this procedure called the lower star filtration \citep{Edelsbrunner10}. Similarly, if $X$ is a cubical complex (an analogue of a simplicial complex that is instead a union of vertices, edges, squares, cubes, and higher-dimensional cubes), then one can produce an increasing sequence of cubical complexes.

In \S\ref{sec:aks}, we study surfaces $u\colon[0,1]^2\to\bR$ produced from the Kuramoto-Sivashinsky equation. The domain $[0,1]^2$ is discretized into a grid of $512\times512$ vertices, i.e.\ a 2-dimensional cubical complex with $512^2$ vertices, $511\cdot512$ horizontal edges, $511\cdot512$ vertical edges, and $511^2$ squares. We produce an increasing sequence of cubical complexes as follows:
\begin{itemize}
\item A vertex $v$ is included at scale $\epsilon$ if $u(v)\le\epsilon$.
\item An edge is included at scale $\epsilon$ if both of its vertices are present.
\item A square is included at scale $\epsilon$ if all four of its vertices are present.
\end{itemize}
Our PDs are obtained by taking the persistent homology of this cubical complex sublevel set filtration.

We remark that PDs from point cloud data in \S\ref{PD_Data} can be viewed as a specific case of PDs from functions. Indeed, given a data set $X$ in some metric space $(M,d)$, let $d_X\colon M\to\bR$ be the distance function to set $X$, defined by $d_X(m)=\inf_{x\in X}d(x,m)$ for all $m\in M$. Note that $d_X^{-1}((-\infty,\epsilon])$ is the union of the metric balls of radius $\epsilon$ centered at each point in $X$. For $\epsilon_1 \leq \epsilon_2 \leq \ldots \leq \epsilon_m$, the persistent homology of

\[ d_X^{-1}((-\infty,\epsilon_1]) \subseteq d_X^{-1}((-\infty,\epsilon_2]) \subseteq \ldots \subseteq d_X^{-1}((-\infty,\epsilon_m]) \]
is identical to the persistent homology of a simplicial complex filtration called the \emph{\v Cech complex}. Furthermore, the persistent homology of the Vietoris--Rips complex is an approximation of the persistent homology of the \v Cech complex \citep[Section~III.2]{Edelsbrunner10}.

\section{Examples of Persistence Images}\label{app:PIEx}

\begin{figure}[H]
\centering
\includegraphics[width=.9\textwidth]{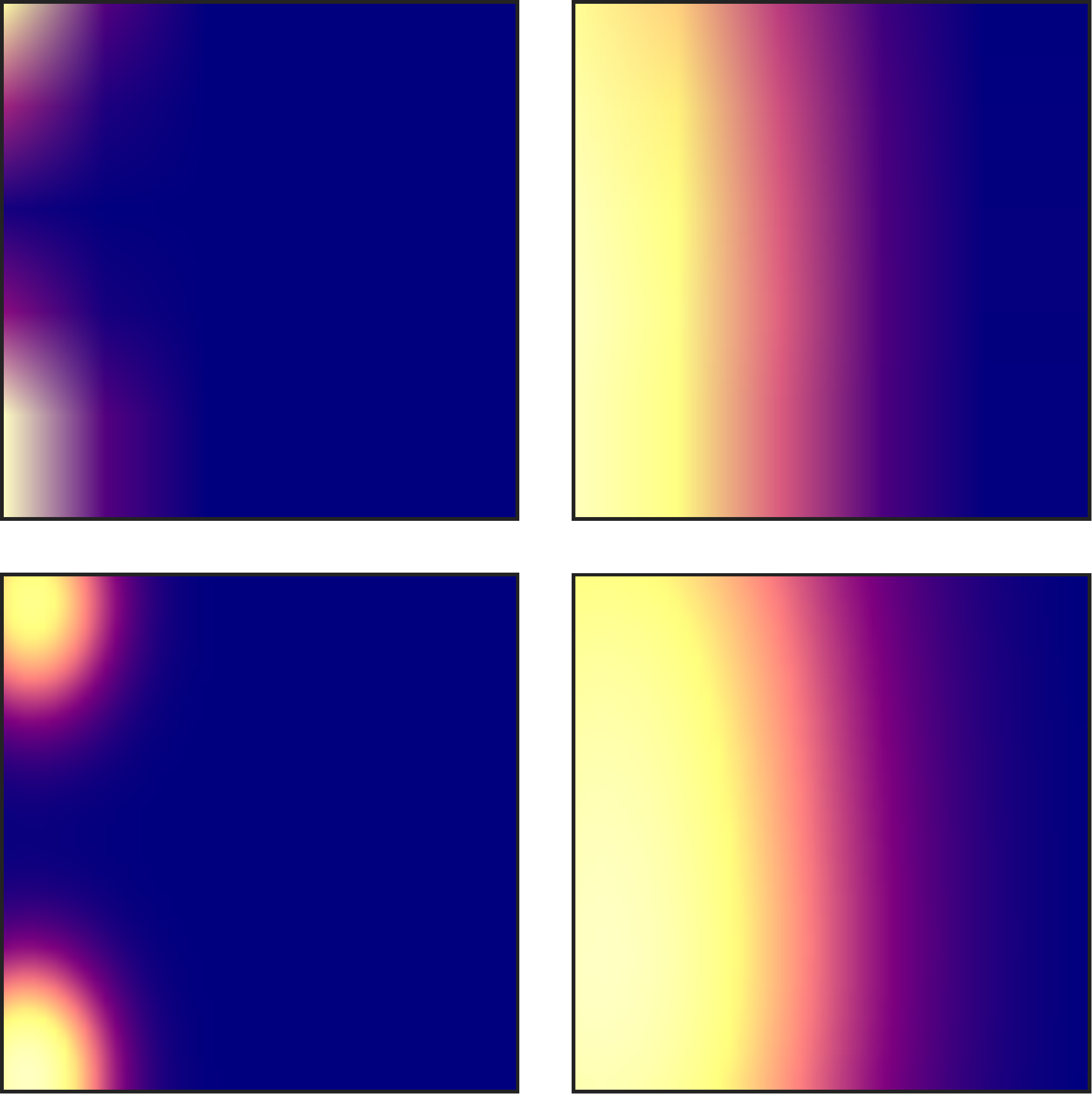}
\caption{Examples of PIs for homology dimension $H_1$ arising from a noisy circle with a variety of resolutions and variances. The first row has resolution $5\times 5$ while the second has $50\times 50$. The columns have variance $\sigma=0.01$ and $\sigma=0.2$, respectively.
}
\label{fig:PIsCircle}
\end{figure}

\begin{figure}[H]
\centering
\includegraphics[width=.9\textwidth]{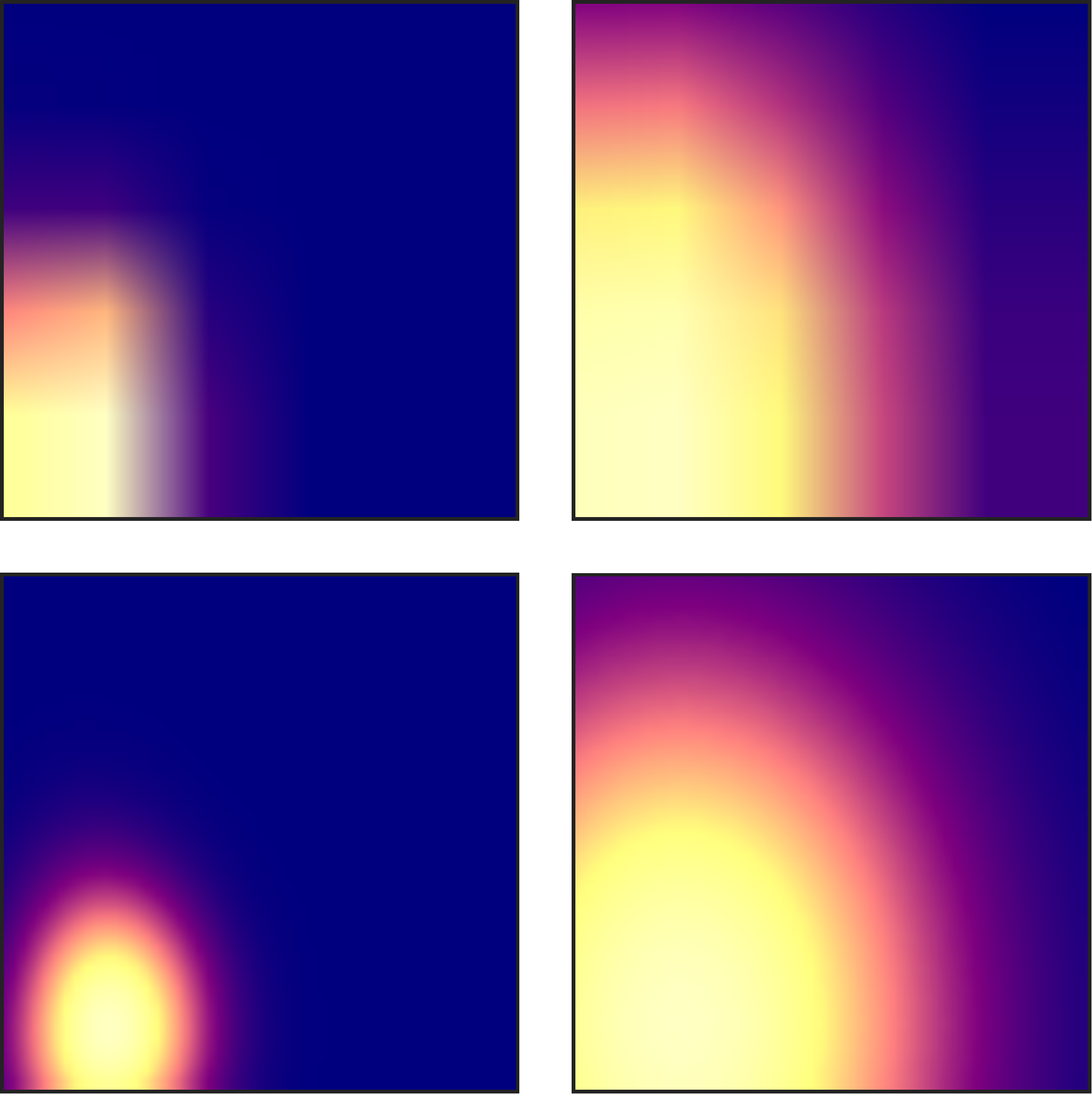}
\caption{Examples of PIs for homology dimension $H_1$ arising from a noisy torus with a variety of resolutions and variances. The first row has resolution $5\times 5$ while the second has $50\times 50$. The columns have variance $\sigma=0.01$ and $\sigma=0.2$, respectively.}
\label{fig:PIsTorus}
\end{figure}

\section{Proofs of Equation~\eqref{eq:Erf} and Lemma~\ref{lem:l1surfbound_2D}} \label{app:proofs}

Let $u,v\in\bR$ and $a,b>0$. Equation~\eqref{eq:Erf} states that $\|ag_u - b g_v\|_1 = F(v-u)$, where $F\colon\bR\to\bR$ is defined by
\[ F(z) = \begin{cases}
|a-b| &\mbox{if }z=0\\
\left|a \text{Erf}\left(\frac{z^2+2\sigma^2\ln(a/b)}{z \sigma 2\sqrt{2}}\right)-b\text{Erf}\left(\frac{-z^2+2\sigma^2\ln(a/b)}{z \sigma 2\sqrt{2}}\right)\right| & \mbox{otherwise.}
\end{cases} \]

\begin{proof}
If $v=u$ then the statement follows from the fact that $g_u$ and $g_v$ are normalized to have unit area under the curve. Hence we may assume $u\neq v$.

For $u \neq v$ a straightforward calculation shows there is a unique solution $z^*$ to $a g_u(z) = b g_v (z)$,
namely
\[z^* = \dfrac{v^2 - u^2 + 2\sigma^2 \ln(a/b)}{2(v-u)}.\]
Note
\begin{equation}\label{eq:agu-bgv}
\|ag_u-bg_v\|_1 = \int_{-\infty}^{\infty} |ag_u(z) - bg_v(z)| dz = \left| \int_{-\infty}^{z^*} ag_u(z) - bg_v(z) dz + \int_{z^*}^{\infty} bg_v(z) - ag_u(z) dz\right|.
\end{equation}
There are four integrals to compute, and we do each one in turn. We have
\begin{align*}
\int_{-\infty}^{z^*} ag_u(z)\ dz & = \frac{a}{\sigma\sqrt{2\pi}}\int_{-\infty}^{z^*} e^{-(z-u)^2/2\sigma^2}\ dz \\
& = \frac{a}{\sqrt{\pi}}\int_{-\infty}^{P(v-u)} e^{-t^2}\ dt &&\mbox{by substitution $t = \frac{z-u}{\sigma\sqrt{2}}$}\\
& = \frac{a}{\sqrt{\pi}} \left(\int_{-\infty}^{0} e^{-t^2}\ dt + \int_{0}^{P(v-u)} e^{-t^2}\ dt \right) \\
& =  \frac{a}{\sqrt{\pi}} \left(\frac{\sqrt{\pi}}{2} + \frac{\sqrt{\pi}}{2} \text{Erf}(P(v-u)) \right) \\
& = \frac{a}{2}\left( 1 + \text{Erf}(P(v-u))\right),
\end{align*}
where $P(z) = \frac{z^2+2\sigma^2\ln(a/b)}{z \sigma 2\sqrt{2}}$. Nearly identical calculations show
\begin{align*}
\int_{z^*}^{\infty} a g_u(z)\ dz & =  \frac{a}{2}\left( 1 - \text{Erf}(P(v-u))\right)\\	
\int_{-\infty}^{z^*} b g_v(z)\ dz & =  \frac{b}{2}\left( 1 + \text{Erf}(Q(v-u))\right)\\
\int_{z^*}^{\infty} b g_v(z)\ dz & =  \frac{b}{2}\left( 1 - \text{Erf}(Q(v-u))\right),
\end{align*}
where $Q(z) = \frac{-z^2+2\sigma^2\ln(a/b)}{z \sigma 2\sqrt{2}}$.
Plugging back into \eqref{eq:agu-bgv} gives $\|ag_u - b g_v\|_1 = F(v-u)$.
\end{proof}

We now give the proof of Lemma~\ref{lem:l1surfbound_2D}.\\

\noindent\textbf{Lemma~\ref{lem:l1surfbound_2D}.}
 For $u,v\in\bR^2$, let $g_u,g_v \colon \bR^2 \to \bR$ be normalized 2-dimensional Gaussians. Then\[ \displaystyle  \|f(u)g_u - f(v)g_v\|_1 \le \left(|\nabla f|+\sqrt{\frac{2}{\pi}}\frac{\min\{f(u),f(v)\}}{\sigma}\right)\|u-v\|_2. \]
 
\begin{proof}
The result will follow from the observation that we can reduce the two-dimensional case involving Gaussians centered at $u,v \in \mathbb{R}^2$ to one-dimensional Gaussians centered at 0 and $r = \|u-v\|_2$. Let $u = (u_x,u_y)$ and $v = (v_x,v_y)$; we may assume $u_x > v_x$ w.l.o.g. Let $(r,\theta)$ be the magnitude and angle of vector $u-v$ when expressed in polar coordinates. The change of variables $(z,w)=R_{\theta}(x-v_x,y-v_y)$, where $R_{\theta}$ is the clockwise rotation of the plane by $\theta$, gives
\begin{align*}
&\|f(u)g_u - f(v)g_v\|_1 \\
=&  \int_{-\infty}^{\infty} \int_{-\infty}^{\infty} \left| \frac{f(u)}{2\pi \sigma^2} e^{-[(x-u_x)^2+(y-u_y)^2]/2\sigma^2} - \frac{f(v)}{2\pi \sigma^2} e^{-[(x-v_x)^2+(y-v_y)^2]/2\sigma^2} \right| dy \; dx \\
=& \int_{-\infty}^{\infty} \int_{-\infty}^{\infty} \left|\frac{f(u)}{2\pi \sigma^2} e^{-[w^2+(z-r)^2]/2\sigma^2} - \frac{f(v)}{2\pi \sigma^2} e^{-[w^2 + z^2]/2\sigma^2} \right| dz \; dw \\
=& \int_{-\infty}^{\infty} \frac{1}{\sigma \sqrt{2\pi}} e^{-w^2/2\sigma^2} \left[ \int_{-\infty}^{\infty}\left|\frac{f(u)}{\sigma\sqrt{2\pi}} e^{-(z-r)^2/2\sigma^2} - \frac{f(v)}{\sigma\sqrt{2\pi}}e^{-z^2/2\sigma^2} \right| dz \right] dw \\
=& \|f(u)g_r-f(v)g_0\|_1 \int_{-\infty}^{\infty} \frac{1}{\sigma \sqrt{2\pi}} e^{-w^2/2\sigma^2} dw \quad \mbox{with }g_0,g_r\mbox{ 1-dimensional Gaussians} \\
=& \|f(u)g_r-f(v)g_0\|_1 \\
\le& |f(u)-f(v)|+\sqrt{\frac{2}{\pi}}\frac{\min\{f(u),f(v)\}}{\sigma}\|u-v\|_2 \quad \mbox{by Lemma~\ref{lem:l1surfbound_1D}} \\
\le& \left(|\nabla f|+\sqrt{\frac{2}{\pi}}\frac{\min\{f(u),f(v)\}}{\sigma}\right)\|u-v\|_2.
\end{align*}
\end{proof}

\section{SSVM-based Feature Selection} 
\label{app:featureselection}

We performed feature selection using one-against-all (OAA) SSVM on the six classes of synthetic data with noise level $\eta=0.05$. The PIs used in the experiments were generated from the $H_1$ PDs, with the parameter choices of resolution $ 20 \times 20$ and variance $\sigma=0.0001$. Note that because of the resolution parameter choice of $20 \times 20$, each PI is a vector in $\bR^{400}$, and the selected features will be a subset of indices corresponding to pixels within the PI.  We trained an OAA SSVM model for PIs of dimension $H_1$. In the experiment, we used 5-fold cross-validation and obtained $100\%$ overall accuracy. Feature selection was performed by retaining the features with non-zero SSVM weights, determined by magnitude comparison using weight ratios \citep{chep}.  The resulting six sparse models contain subsets of discriminatory features for each class. Note that one can use only these selected features for classification without loss in accuracy. These features correspond to discriminatory pixels in the persistence images.

Figure~\ref{fig:FeatureSelection2} shows locations of pixels in the vectorized PIs selected by OAA SSVM that discriminate each class from all the others. 

\begin{figure}[H]
\centering
\includegraphics[width=0.9\textwidth]{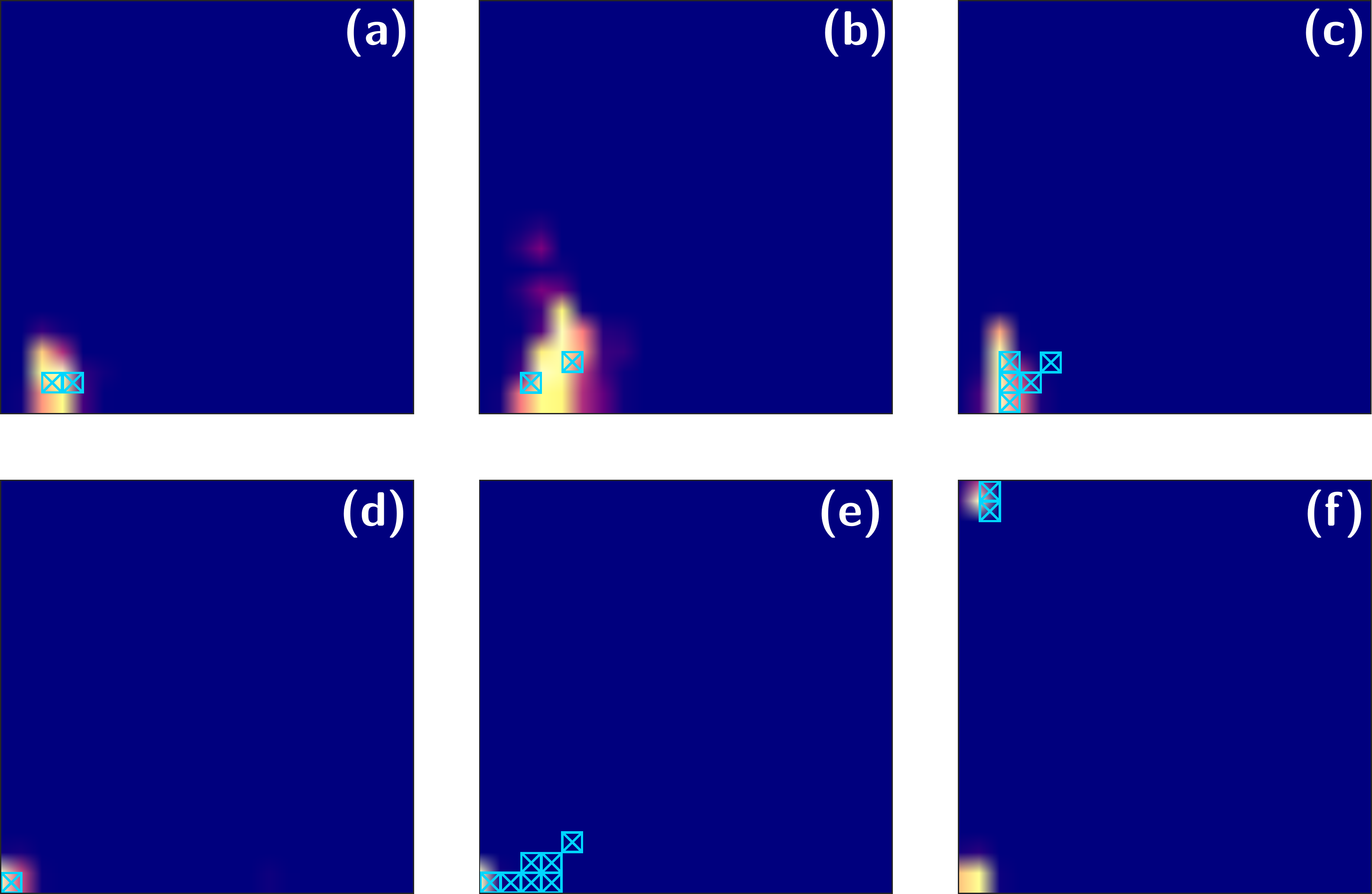}
\caption{SSVM-based feature (pixel) selection for $H_1$ PIs from the six classes of the synthetic data. The parameters used are resolution $20 \times 20$ and variance $0.0001$, for noise level $0.05$. Selected pixels are marked by blue crosses. (a) A noisy solid cube with the two selected pixels (indices 59 and 79 out of 400). (b) A noisy torus with the two selected pixels (indices 59 and 98 out of 400). (c) A noisy sphere with the five selected pixels (indices 58, 59, 60, 79, and 98 out of 400). (d) Noisy three clusters with the one selected pixel (index 20 out of 400). (e) Noisy three clusters within three clusters with the seven selected pixels (indices 20, 40, 59, 60, 79, 80, and 98 out of 400). (f) A noisy circle with the two selected pixels (indices 21 and 22 out of 400). }
\label{fig:FeatureSelection2}
\end{figure}


\end{document}